\documentclass[12pt,reqno]{amsart}

\usepackage{amsmath,amsfonts,amsbsy,amsthm}
\usepackage{eucal}
\usepackage{dsfont}

\usepackage{enumerate}
\usepackage{enumitem}
\setlist{leftmargin=*}



\numberwithin{equation}{section}

\makeatletter
\newtheoremstyle{corsivo}
   {\medskipamount}{\medskipamount}%
   {\itshape}{}%
   {\bfseries}{}%
   { }
   {\thmname{#1}\thmnumber{\@ifnotempty{#1}{ }\@upn{#2}}%
    \thmnote{ {\bfseries(#3)}}.}%
\makeatother

\theoremstyle{corsivo}
\newtheorem{thm}{Theorem}[section]

\newtheorem{prop}[thm]{Proposition}

\makeatletter
\newtheoremstyle{dritto}
   {\medskipamount}{\medskipamount}%
   {\rmfamily}{}%
   {\bfseries}{}%
   { }
   {\thmname{#1}\thmnumber{\@ifnotempty{#1}{ }\@upn{#2}}%
    \thmnote{ {\bfseries(#3)}}.}%
\makeatother

\theoremstyle{dritto}
\newtheorem{dfn}[thm]{Definition}

\newtheorem{assumption}[thm]{Assumption}


\newcommand{\sub}[1]{_{\mathrm{#1}}}



\newcommand{\Id}{\mathds{1}}  

\newcommand{\iu}{\mathrm{i}}   



\newcommand{\N}{\mathbb{N}}
\newcommand{\Z}{\mathbb{Z}}

\newcommand{\R}{\mathbb{R}}
\newcommand{\C}{\mathbb{C}}


\newcommand{\Hi}{\mathcal{H}}

\newcommand{\Dom}{\mathcal{D}}







\DeclareMathOperator{\re}{Re} \DeclareMathOperator{\im}{Im}


\DeclareMathOperator{\expo}{exp}


\newcommand{\ie}{{\sl i.\,e.\ }}   


\newcommand{\virg}[1]{``#1''}


\newcommand{\X}{\mathfrak{X}}
\newcommand{\K}{\mathfrak{P}}
\newcommand{\Dd}{\mathcal{C}}
\newcommand{\aU}{\textbf{a}}
\newcommand{\aD}{\textbf{b}}
\newcommand{\Gabor}{\mathcal{G}}
\newcommand{\x}{{\bf{x}}}
\newcommand{\OC}{G}
\newcommand{\KM}{K}
\newcommand{\Hol}{\text{Hol}}


\usepackage{color}










\let\oldfootnote\footnote
\renewcommand{\footnote}[1]{\oldfootnote{\  #1}}

\setlength{\oddsidemargin}{5mm} \setlength{\evensidemargin}{5mm}
\setlength{\textwidth}{15cm}

\setlength{\parskip}{1mm}

\title[Symmetry and localization for magnetic Schr\"{o}dinger operators]
{ Symmetry and localization\\[2mm] for magnetic Schr\"{o}dinger operators: \\[2mm]
Landau levels, Gabor frames and all that}

\author{Massimo Moscolari and Gianluca Panati}

\date{January 30, 2019 - Final version, to appear in {\it Acta Applicandae Mathematicae} (2019).}

\usepackage{hyperref}


\begin{document}

\maketitle

\begin{abstract}
We investigate the relation between broken time-reversal symmetry and localization of the electronic states, in the explicitly tractable case of the Landau model.

We first review, for the reader's convenience, the symmetries of the Landau Hamiltonian and the relation of the latter with the Segal-Bargmann representation of Quantum Mechanics. We then study the localization properties of the Landau eigenstates by applying an abstract version of the Balian-Low Theorem to the operators corresponding to the coordinates of the centre of the cyclotron orbit in the classical theory. Our proof of the Balian-Low Theorem, although based on Battle's main argument, has the advantage of being representation-independent. \\

\noindent \textsc{Keywords:} Magnetic translations, Weyl relations, Balian-Low Theorem, Segal-Bargmann space.   \\

\end{abstract}

\tableofcontents

\section{Introduction}

Symmetries play a prominent role in our understanding of the physical world. 
Solid State Physics is not exceptional, and researchers in this field are often guided by the symmetry principle. 
For example, it has been realized that in crystalline insulators the lattice translation invariance (a unitary $\Z^d$-symmetry) 
and the time-reversal symmetry (an antiunitary $\Z_2$-symmetry) cooperate to yield 
exponential localization of the electronic states.  
More precisely, in a sequence of papers starting with a seminal contribution by the Nobel Laureate W. Kohn 
\cite{Kohn59,Cl1,Cl2,NeNe82,Ne83,HeSj89,BrPaCaMoMa2007,Panati2007,FiMoPa_2,CaLePaSt2016,CoHeNe2015,CoMoMos2018}, it has been rigorously proved that 
- if the Hamiltonian is gapped and enjoys both the mentioned symmetries - one can construct Wannier functions associated 
to the Bloch bands  below the Fermi energy which are exponentially localized.  
We refer to [MoPa] for a short review of this subject. 

Whenever a magnetic field is included in the model, time-reversal symmetry (TRS) is broken, since the 
corresponding Schr\"{o}dinger operator does not commute with complex conjugation, which represents TRS in this simple one-particle context. 
Hence, a natural question arises: how the breaking of TRS reflects in the localization properties of the 
electronic states?  

One may address this question at different levels.  On the one hand, an abstract, model-independent answer has 
been recently provided \cite{MoPaPiTe1,MoPaPiTe2}, and advertised under the name of  
\emph{Localization-Topology Correspondence}.  In a nutshell, this recent result says that whenever the space of states below the 
Fermi energy has a non-trivial topology, there is no choice of Wannier functions yielding a finite expectation value 
for the squared position operator. In view of the Transport-Topology Correspondence \cite{TKNN}, this topological non-triviality reflects in a non-zero transverse conductance at zero temperature, hence in a measurable property of the system under investigation.  The Localization-Topology Correspondence relates the latter property to the localization of the electronic state.

On the other hand, here we investigate the same question in a more explicit way, 
by considering the simplest possible model in the suitable symmetry class: the Landau Hamiltonian.  
We refer here to the Altland-Zirnbauer symmetry classes, which are commonly used to classify topological 
insulators and superconductors  \cite{AtZi1997,Kitaev2009,HasanKane}.
The Landau, the Hofstadter and the Haldane operators are, presumably, the simplest models in the symmetry class A, 
which describes Quantum Hall systems and Chern insulators. While the latter two examples refer to discrete models, 
the former is a differential operator, hence tractable with the usual tools of Real and Functional Analysis. 
Notice that the application of the {Localization-Topology Correspondence} to the Haldane model has been recently discussed in \cite{MoMaMoPa}.

The answer we provide is actually not completely new, as the same problem has been investigate by 
Zak and coworkers \cite{Zak97,RaZuEf1997,Zak98}. However, our approach might be more appealing to mathematically oriented readers, since our proof clearly highlights the central role of the Balian-Low theorem as the deep mathematical structure behind the result. In this paper, following the idea in \cite{Zak97,Zak98}, we provide an alternative and self-consistent proof of the Zak's results using an abstract version of the Balian-Low theorem.

After reviewing the Landau model, with a particular emphasis on its symmetries and its relation 
with the Segal-Bargmann holomorphic representation of Quantum Mechanics (Section \ref{Sec:Landau}),
we provide a new,  representation-independent proof of the Balian-Low theorem, which slightly generalizes 
Battle's proof (Section \ref{Sec:Balian-Low}). As a consequence of this abstract Balian-Low Theorem (Theorem \ref{BalianLowTheorem}) we obtain a straightforward and 
mathematically transparent  proof of Zak's results.   
In short, we prove that it is not possible to construct a system of orthonormal eigenfunctions
corresponding to a given Landau level which are both intertwined by magnetic translations 
(the technical word is \emph{Gabor frame}) and well-localized in space.   

The non-existence of such a well-localized basis for a single Landau level is in complete accordance with the paradigm of the Localization-Topology Correspondence. Indeed, each Landau level is topologically non trivial and its Hall conductivity is non-zero \cite{AvSeSi1994}.

Although the same results on the Landau Hamiltonian have been already obtained by Zak two decades ago, we are confident that our new argument may be useful to clarify the essential mathematical structure underlying the subtle relation between broken time-reversal symmetry and delocalization 
in magnetic periodic quantum systems.   

\section{The Landau model and its symmetries: a review}
\label{Sec:Landau}
In this Section we shortly review  the main properties of the Landau Hamiltonian, with emphasis on its symmetries and on the correspondence between the classical and the quantum theory. We are interested in describing the quantum mechanical behaviour of a point charge in $\R^2$ under the influence of a magnetic field perpendicular to the plane. 


At the classical level one considers the phase space $\R^{2}\times \R^2$ where each point is labelled as $(q,p)$ with $q$ and $p$ vectors in $\R^2$. The Hamiltonian governing the dynamics of the system is 
$$
H(q,p)=\frac{1}{2 m}\left( \vphantom{\frac{1}{2}}p-\frac{ \text{\tiny Q}}{c} \, {A}\sub{ph}(q)\right)^2 \,,
$$
where $m$ and $\text{\tiny Q}$ denote, respectively, the mass and the charge of the particle, and $c$ is the speed of light. For a uniform magnetic field $B$, the corresponding vector potential (in the usual physical units) is 
$$
{A}\sub{ph}(q)=\frac{1}{2} \left( {B} \cdot {\mathbf{e}_3} \right) \left( - q_2, q_1 \right)  
$$
where $\left({\mathbf{e}_1},{\mathbf{e}_2},{\mathbf{e}_3}\right)$  is a positively-oriented basis with ${\mathbf{e}_3} \mathbin{\!/\mkern-5mu/\!} {{{B}}}$ (clearly the embedding $\R^2 \subset \R^3$ is understood).   For the sake of a simpler notation, we consider a particle of unit mass, so that the Hamiltonian reads
\begin{equation}
\label{CHamiltonian}
H(q,p) =\frac{1}{2}\left( \vphantom{\frac{1}{2}}p-A(q)\right)^2 
\end{equation}
where $A(q)=\frac{b}{2}  \left(-q_2,q_1\right)$, with $b=\frac{ \text{\tiny Q}}{c} \left( {{B}} \cdot {\mathbf{e}_3} \right)$. Up to an appropriate choice of the orthonormal frame, one can always assume that $b>0$, as we do hereafter. Notice, however, that the dynamics depends on the sign of the charge, since for positive charges one has $\left( {B} \cdot {\mathbf{e}_3} \right)>0$, while 
$\left({B} \cdot {\mathbf{e}_3} \right) <0$ for negative charges.

The dynamics is generated by Hamilton's equations 
$$
\left\{ \begin{aligned}
&\dot{q_1}=\frac{\partial H }{\partial p_1} (q,p)= p_1 + \frac{b}{2} q_2 \, ,\\
& \dot{q_2}=\frac{\partial H }{\partial p_2} (q,p)= p_2 - \frac{b}{2} q_1 \, ,\\
&\dot{p_1}=-\frac{\partial H}{\partial q_1} (q,p)= \frac{b}{2} (p_2 - \frac{b}{2} q_1 ) \, ,\\
& \dot{p_2}=-\frac{\partial H}{\partial q_2} (q,p)= -\frac{b}{2} (p_1 + \frac{b}{2} q_2 ) \, .
\end{aligned} \right.
$$
By substitution, it follows that 
$$
\left\{
\begin{aligned}
&\ddot{q_1} = b  \,\dot{q_2} \, ,\\
&\ddot{q_2} = -b \, \dot{q_1} \, .
\end{aligned}
\right.
$$
The solutions are described by the well-known cyclotron dynamics
\begin{equation}
\label{CiclotronOrbit}
\left\{
\begin{aligned}
&q_1(t)=\bar{q}_1-\frac{1}{b} \, \dot{q_2}(t) \, ,\\
&q_2(t)=\bar{q}_2+\frac{1}{b} \, \dot{q_1}(t) \, 
\end{aligned}
\right.
\end{equation}
where we denoted by $\bar{q}$ the centre of the cyclotron orbit, which depends on the initial data $\left( q(0), \dot{q}(0) \right)$.

\subsection{Symmetries of the classical theory}
\label{Sec:classical}
From the equations \eqref{CiclotronOrbit} it is easy to recognize that the centre of the cyclotron orbit is a constant of motion. This can be understood as follows: the solutions to the equations of motion are localized in space, but the system has no preferred centre of localization. The area in which the solution is localized is labelled by the coordinates of the centre $\bar{q}=\left(\bar{q}_1,\bar{q}_2\right)$.

A posteriori we notice that the following canonical transformation simplifies the problem in a substantial way:
\begin{equation}
\label{Classicalcoord}
\left\{
\begin{aligned}
&\widetilde{q}_1=\bar{q}_1=\frac{1}{b} (p_2+ \frac{b}{2}q_1) \, , \qquad &&\widetilde{p}_1=\bar{q}_2=\frac{1}{b} (-p_1+ \frac{b}{2}q_2) \, ,\\
&\widetilde{q}_2= \frac{1}{b}(p_1 + \frac{b}{2} q_2) \, , \qquad \, &&\widetilde{p}_2=p_2 - \frac{b}{2} q_1  \, .
\end{aligned}
\right.
\end{equation}
Indeed, after the above transformation the Hamiltonian depends only on the second set of coordinates $$H(\widetilde{q},\widetilde{p})=\frac{1}{2}({\widetilde{p}_2}^{\,\, 2} + b^2 \, {\widetilde{q}_2}^{\,\, 2})$$ and it is simply the Hamiltonian of a harmonic oscillator, but defined on the phase space $\R^2 \times \R^2 \ni \left(\widetilde{q}_1, \widetilde{q}_2, \widetilde{p}_1,\widetilde{p}_2\right)$. In particular, the following Poisson brackets vanish
$$
\left\{H,\widetilde{q}_1\right\}=0=\left\{H,\widetilde{p}_1\right\} \, ,
$$
as a consequence of the special symmetry of the problem. Let us see how this information reflects at the quantum mechanical level.

\subsection{Symmetries of the quantum theory}
\label{Sec:quantum}

Consider the Hilbert space $L^2(\R^3)$, let ${\bf{P}}=-\iu \nabla$ be the momentum operator and let $X_i$, $i\in \left\{1,2,3\right\}$, be the three components of the position operator, \ie $(X_i \psi)(\x)=x_i \psi(\x)$ for all $\psi \in \Dom(X_i)$. The Schr\"odinger operator that describes a point charge of mass $m=1$ moving in $\R^3$ under the influence of a constant magnetic field perpendicular to the $xy$ plane is given by \footnote{ We use Hartree units (so that  $e^2$, $\hbar$, $m_e$ are dimensionless and equal to $1$) and $b$ is defined as in \eqref{CHamiltonian}.} 
$$
\iu \partial_t \psi = \frac{1}{2} \left\{ \left( P_1 + \frac{b}{2}X_2 \right)^2 + \left( P_2 - \frac{b}{2}X_1 \right)^2 + P_3^2  \right\} \psi =: H \psi\, .
$$
Exploiting the isomorphism $L^2(\R^3) \cong L^2(\R^2)\otimes L^2(\R)$ it is sufficient to study the dynamics induced by the following operator, densely defined in $L^2(\R^2)$,
\begin{equation*}
H_L := \frac{1}{2}  \left( P_1 + \frac{b}{2}X_2 \right)^2 + \frac{1}{2}  \left( P_2 - \frac{b}{2}X_1 \right)^2 \, ,
\end{equation*}
which is called Landau Hamiltonian. Note that $H_L$ is essentially selfadjoint on the dense domain $C^{\infty}_0(\R^2)$. Motivated by the classical case, see \eqref{Classicalcoord}, we define the following operators
\begin{equation}
\label{QDef}
\begin{aligned}
&\KM_1:= \frac{1}{b}(P_1 + \frac{b}{2} X_2) \, , \qquad \, \KM_2:=P_2 - \frac{b}{2} X_1  \, ;\\
&\OC_1:=\frac{1}{b} (P_2+ \frac{b}{2}X_1) \, , \quad \OC_2:=\frac{1}{b} (-P_1+ \frac{b}{2}X_2) \, .
\end{aligned}
\end{equation}
The operators $\left(\OC_1,\OC_2\right)$ are the quantum analogous of the coordinates of the cyclotron orbit $\left(\bar{q}_1,\bar{q}_2\right)\equiv \left(\widetilde{q}_1,\widetilde{p}_1\right) $ in the classical theory, while $\left(\KM_1,\KM_2\right)$ are the \virg{dynamical coordinates} $\left(\widetilde{q}_2,\widetilde{p}_2\right)$ appearing in \eqref{Classicalcoord}.
As before, we can write the Hamiltonian as $H_L=\frac{1}{2}\left({\KM_2}^{2} + b^2 {\KM_1}^{2} \right)$.  All the operators defined in \eqref{QDef} are essentially selfadjoint on $C^{\infty}_0(\R^2)$, see \cite[Proposition 9.40]{Hall13}, and by explicit computations we observe that they satisfy the following commutation relations
\begin{equation}
\label{QCR}
\begin{aligned}
&\left[\KM_1,\KM_2\right]= \iu \Id \, ,  \qquad &&\left[\OC_1,\OC_2\right]= -\frac{\iu}{b} \Id \, , \\
&\left[\KM_j, \OC_l\right]= 0 \, , \qquad &&\;\forall \, j,l \in \left\{1,2\right\} \, .
\end{aligned}
\end{equation}
In view of Stone's Theorem, we have the following proposition. 

\begin{prop} \cite[Proposition 13.5]{Hall13} 
	\label{StoneT}
	The four operators defined in equations \eqref{QDef} are the four generators of the following one parameter unitary groups
	\begin{align}
	\label{StoneK}
	&\{e^{\iu t \KM_1}= e^{\iu t\frac{1}{b}P_1} e^{\iu t\frac{1}{2} X_2} \}_{t \in \R} \, , \qquad \, \{e^{\iu t \KM_2}= e^{\iu t P_2} e^{-\iu t\frac{b}{2} X_1} \}_{t \in \R}  \, ;\\
	\label{StoneG}
	&\{e^{\iu t \OC_1}= e^{\iu t\frac{1}{b}P_2} e^{\iu t\frac{1}{2} X_1} \}_{t \in \R} \, , \quad \{e^{\iu t \OC_2}= e^{-\iu t\frac{1}{b}P_1} e^{\iu t\frac{1}{2} X_2} \}_{t \in \R} \, .
	\end{align}
	
\end{prop}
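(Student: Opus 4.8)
The plan is to derive the statement from Stone's theorem combined with the elementary fact that two \emph{strongly commuting} selfadjoint operators exponentiate to the product of their unitary groups, with no Weyl-type phase factor appearing precisely because the relevant commutators vanish (in contrast to the canonical pair $P_i,X_i$).

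First I would recall what is already available: each of $\KM_1,\KM_2,\OC_1,\OC_2$ is a real-linear combination of one momentum component $P_i$ and one position component $X_j$ with $i\neq j$, and each is essentially selfadjoint on $C^{\infty}_0(\R^2)$ (as quoted just above from \cite{Hall13}); denote by the same symbols their selfadjoint closures. By Stone's theorem each of them therefore generates a strongly continuous one-parameter unitary group, so the only thing left to check is the product factorization on the right-hand sides of \eqref{StoneK}--\eqref{StoneG}.

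Next I would pass to the tensor decomposition $L^2(\R^2)\cong L^2(\R_{x_1})\otimes L^2(\R_{x_2})$. Under it, for example, $P_1=p\otimes\Id$ and $X_2=\Id\otimes x$ act on different factors, hence their spectral projections commute, i.e. $P_1$ and $X_2$ strongly commute; similarly $P_2$ and $X_1$ strongly commute. For any pair $A,B$ of strongly commuting selfadjoint operators and any $\alpha,\beta\in\R$, the operator $\alpha A+\beta B$ is essentially selfadjoint on the algebraic tensor product of cores of $A$ and $B$ (in particular on $C^{\infty}_0(\R^2)$, which is such a core and is invariant under the groups generated by $P_i$ and $X_j$), its closure coincides with the operator produced by the joint functional calculus, and $e^{\iu t(\alpha A+\beta B)}=e^{\iu t\alpha A}\,e^{\iu t\beta B}$ for all $t\in\R$. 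This is immediate from the joint spectral theorem; alternatively it follows from the Trotter product formula, whose error term vanishes because $[A,B]=0$. Applying this with $(A,B,\alpha,\beta)=(P_1,X_2,\tfrac1b,\tfrac12)$ yields \eqref{StoneK} for $\KM_1$, and the other three identities follow verbatim with $(P_2,X_1,1,-\tfrac b2)$ for $\KM_2$, with $(P_2,X_1,\tfrac1b,\tfrac12)$ for $\OC_1$, and with $(P_1,X_2,-\tfrac1b,\tfrac12)$ for $\OC_2$.

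The only point that needs a little care — and the nearest thing to an obstacle — is the transition from formal manipulation of unbounded operators to rigorous statements: one must make sure that the relevant sums are essentially selfadjoint on $C^{\infty}_0(\R^2)$ and that it is \emph{strong} commutativity, not merely commutativity on a common core, that legitimizes the splitting of the exponential. Both are standard facts about tensor products of selfadjoint operators, which is exactly why the result can simply be cited from \cite[Proposition 13.5]{Hall13}.
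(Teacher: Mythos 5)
Your argument is correct, and it coincides with what the paper relies on: the paper gives no proof of this proposition at all, simply citing \cite[Proposition 13.5]{Hall13}, and your derivation --- Stone's theorem for each essentially selfadjoint combination, plus the splitting $e^{\iu t(\alpha A+\beta B)}=e^{\iu t\alpha A}e^{\iu t\beta B}$ for the \emph{strongly} commuting pairs $(P_1,X_2)$ and $(P_2,X_1)$ via the joint functional calculus, with no Weyl phase precisely because the two operators act on different tensor factors --- is exactly the standard content behind that citation, with all four coefficient assignments checked correctly against \eqref{QDef}.
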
 

Let us focus our attention on the action of the one parameter unitary group associated to $\OC_1$. For any vector $\psi \in L^2(\R^2)$, since $[X_1, P_2] =0$, one has that 
\begin{equation}
\label{ActionSymMT}
(e^{\iu \alpha \OC_1} \psi)(\x)=e^{\iu \alpha \frac{x_1}{2}} \psi\left(x_1,x_2+\frac{\alpha}{b}\right) \, .
\end{equation}
We see that the action of $e^{\iu \alpha \OC_1}$ amounts to a translation of the wave function and a gauge transformation generated by the function $\chi(\x):=-\alpha \frac{x_1}{2}$. Hence the magnetic vector potential in the new gauge is given by
$$
A'(\x)=A(\x)+ \nabla \chi(\x) = \frac{b}{2}\left(-\left(x_2+\frac{\alpha}{b}\right),x_1\right)=A\left(x_1,x_2+\frac{\alpha}{b}\right)\, 
$$
that is exactly the same translation performed on the wave function. In other words, what the unitaries $e^{\iu \alpha \OC_i}$ are doing is shifting the eigenfunctions in the plane $\R^2$, which means that they are changing the centre of localization of the wave function and simultaneously acting with a \emph{gauge transformation} that shifts the zero of the magnetic vector potential by the same amount.  What is the physical interpretation of this fact and how is it related to the classical cyclotron orbit centre?  

While the Hamiltonian $H_L$ does not commute with the ordinary translations $T_{\boldsymbol{\alpha}}$ (defined by $\left(T_{\boldsymbol{\alpha}}\psi\right) (\x)=\psi(\x-\boldsymbol{\alpha})$ ) it does commute with the \emph{ magnetic translations } $e^{\iu \alpha_2 \OC_1}$ and $e^{\iu \alpha_1 \OC_2}$, as stated in the following proposition, which is a standard results and whose proof is detailed for the reader's convenience.

\begin{prop}
	The one-parameter unitary groups defined in \eqref{StoneG} are two symmetry groups for the Hamiltonian $H_L$, that is
	$$
	e^{\iu \alpha \OC_j} H_L e^{-\iu \alpha \OC_j} = H_L \, , \quad \forall \alpha \in \R \, , \quad j\in \left\{1,2\right\}\,.
	$$
\end{prop}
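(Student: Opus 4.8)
The plan is to reduce the statement entirely to the vanishing commutators $[\KM_j,\OC_l]=0$ recorded in \eqref{QCR}. First I would rewrite the Landau Hamiltonian in terms of the kinetic momenta $\pi_1:=P_1+\tfrac{b}{2}X_2=b\,\KM_1$ and $\pi_2:=P_2-\tfrac{b}{2}X_1=\KM_2$, so that $H_L=\tfrac12(\pi_1^2+\pi_2^2)$. Since $\pi_1,\pi_2$ coincide, up to the harmless constant $b$, with $\KM_1,\KM_2$, the relations \eqref{QCR} give $[\OC_j,\pi_i]=0$ for all $i,j\in\{1,2\}$ on the common core $C^\infty_0(\R^2)$. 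Hence it suffices to show that conjugation by $U^{(j)}_\alpha:=e^{\iu\alpha\OC_j}$ leaves each $\pi_i$ fixed: then it leaves $\pi_i^2$ fixed, and therefore also $H_L=\tfrac12(\pi_1^2+\pi_2^2)$, which is essentially self-adjoint on $C^\infty_0(\R^2)$.

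For $j=1$ I would use the explicit action \eqref{ActionSymMT}: $e^{\iu\alpha\OC_1}$ is the composition of the translation $x_2\mapsto x_2+\alpha/b$ with multiplication by the phase $e^{\iu\alpha x_1/2}$. A direct change of variables then yields, on $C^\infty_0(\R^2)$, the conjugation rules $X_1\mapsto X_1$, $X_2\mapsto X_2+\alpha/b$, $P_1\mapsto P_1-\alpha/2$, $P_2\mapsto P_2$. Substituting, the two shifts in $\pi_1=P_1+\tfrac{b}{2}X_2$ cancel, since $-\alpha/2+\tfrac{b}{2}\cdot\alpha/b=0$, while $\pi_2=P_2-\tfrac{b}{2}X_1$ is untouched; hence $U^{(1)}_\alpha H_L U^{(1)}_{-\alpha}=H_L$. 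The case $j=2$ is completely analogous: from \eqref{StoneG} one reads off that $e^{\iu\alpha\OC_2}$ is a translation in the $x_1$ variable composed with a phase depending on $x_2$, and the same cancellation then occurs, this time among the shifts in $\pi_2$ while $\pi_1$ is left invariant.

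The algebraic content is thus entirely contained in \eqref{QCR}; the only real care needed is with domains. Here one uses that all the operators $\KM_j,\OC_l$ and $H_L$ are essentially self-adjoint on $C^\infty_0(\R^2)$, and that the magnetic translations $U^{(j)}_\alpha$ map $C^\infty_0(\R^2)$ onto itself, since a translation and a multiplication by a smooth function both preserve $C^\infty_0(\R^2)$. Consequently the operator identities established above on the invariant core extend to the self-adjoint closures, so that $U^{(j)}_\alpha$ preserves $\Dom(H_L)$ and intertwines $H_L$ with itself, which is the claim. An alternative route, which avoids the explicit change of variables, is to differentiate $\alpha\mapsto U^{(j)}_\alpha\,\pi_i\,U^{(j)}_{-\alpha}\psi$ for $\psi\in C^\infty_0(\R^2)$ and observe that the derivative equals $\iu\,U^{(j)}_\alpha[\OC_j,\pi_i]U^{(j)}_{-\alpha}\psi=0$ by \eqref{QCR}, so that the conjugated operator is constant in $\alpha$ and equals $\pi_i$ at $\alpha=0$; I expect the mild obstacle in either approach to be precisely this domain bookkeeping (respectively the justification of the termwise differentiation on the core), the rest being a short computation.
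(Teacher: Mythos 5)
Your proof is correct, but it follows a genuinely different (and more explicit) route than the paper. The paper's proof works directly with $H_L$: it observes that $[H_L,\OC_j]\varphi=0$ for $\varphi$ in the invariant core $C^\infty_0(\R^2)$, and then invokes a differentiation lemma for the map $\alpha\mapsto e^{\iu\alpha\OC_j}H_Le^{-\iu\alpha\OC_j}\varphi$ to conclude that this family is constant in $\alpha$. Your main argument instead descends to the first-order operators $\pi_1=b\KM_1$, $\pi_2=\KM_2$ and computes the conjugation of $X_i$, $P_i$ under the explicit translation-plus-phase action \eqref{ActionSymMT}, verifying the exact cancellation $-\alpha/2+\tfrac{b}{2}\cdot\tfrac{\alpha}{b}=0$ that leaves each $\pi_i$ fixed; the conjugation of $H_L=\tfrac12(\pi_1^2+\pi_2^2)$ then follows by squaring on the invariant core, and the extension to the self-adjoint closures is clean because conjugating an essentially self-adjoint operator by a unitary that maps the core onto itself commutes with taking closures. (Your parenthetical ``alternative route'' --- differentiating $\alpha\mapsto U^{(j)}_\alpha\pi_i U^{(j)}_{-\alpha}\psi$ --- is essentially the paper's technique, only applied to $\pi_i$ rather than to $H_L$.) What each approach buys: yours avoids the differentiation lemma entirely and makes the gauge-covariance mechanism visible (the shift in $P_1$ is compensated by the shift in $X_2$, which is exactly the statement that the magnetic translation shifts the vector potential along with the wave function, as discussed after \eqref{ActionSymMT}); the paper's argument is shorter on computation and is the same abstract commutator-differentiation device reused later in the proofs of Proposition \ref{HeisenbergCR} and Theorem \ref{BalianLowTheorem}, so it keeps the exposition uniform. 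I see no gap in your domain bookkeeping: $e^{\iu\alpha\OC_j}$ does map $C^\infty_0(\R^2)$ bijectively onto itself, and that is all that is needed.
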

\begin{proof}
	Consider a vector $\varphi \in C^{\infty}_0(\R^2)$. It is easy to see that $H_L \varphi \in C^{\infty}_0(\R^2)$ and $\OC_j \varphi \in C^{\infty}_0(\R^2)$ for $j\in \left\{1,2\right\}$. Then we are allowed to define the commutator as an operator acting on the dense invariant subset $C^{\infty}_0$ and we get
	$$
	H_L \OC_j \varphi - \OC_j H_L \varphi = \left[H_L,\OC_j\right] \varphi =0\, .
	$$
	Hence the commutator is well defined on $L^2(\R^2)$ and is equal to the null operator. Consider now the one-parameter unitary group given by the Stone Theorem, that is $e^{\iu \alpha \OC_j}$, $\alpha \in \R$. From \cite[Lemma 10.17]{Hall13} we have that for every $\varphi \in C^{\infty}_0$, the function 
	$$
	S_{\varphi}(\cdot) : \alpha \mapsto e^{\iu \alpha \OC_j} H_L e^{-\iu \alpha \OC_j}\varphi \; , 	S_{\varphi}(\cdot) : \R \to L^2(\R^2)
	$$
	is differentiable and 
	$$
	\partial_{\alpha}S_{\varphi}(\alpha)=\iu e^{\iu \alpha \OC_j} \left[\OC_j,H_L\right] e^{-\iu \alpha \OC_j}\varphi = 0 \, 
	$$
	because the space $C^{\infty}_0$ is invariant by $e^{-\iu \alpha \OC_j}$ \footnote{See formula \eqref{ActionSymMT}. The action of the unitary group is simply a translation and a multiplication by a smooth phase.} and then we can use the argument above. Therefore we have that
	$$
	e^{\iu \alpha \OC_j} H_L e^{-\iu \alpha \OC_j} = H_L \, ,
	$$
	and the group properties are guaranteed by Proposition \ref{StoneT}.
\end{proof}

Using the standard approach, whose essential idea traces back to Landau, one can prove the following two propositions. 
\begin{prop}[Landau]
	\label{SpectrumHL}
	The spectrum of the Landau Hamiltonian $H_L$ is discrete and given by
	$$
	E_n=|b|\left(n+\frac{1}{2}\right) \, , \qquad n \in \N \, ,
	$$ 
	where  $E_n$ is called the $n^{th}$ Landau level\footnote{With a little abuse of terminology, we use the term \virg{$n^{th}$ Landau level} or \virg{lowest Landau level} also to refer to the corresponding eigenspaces.}. Moreover the function 
	\begin{equation}
	\label{SolutionAZero}
	\phi_0(\x)= \left(\frac{|b|}{2 \pi} \right)^{\frac{1}{2}} e^{-\frac{|b|}{4}|\x|^2}
	\end{equation}
	is an eigenfunction corresponding to the lowest Landau level $E_0$.
\end{prop}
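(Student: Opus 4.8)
The plan is to exploit the harmonic--oscillator structure already visible in $H_L = \frac{1}{2}(\KM_2^{\,2} + b^2\KM_1^{\,2})$ together with the relation $[\KM_1,\KM_2] = \iu\Id$ from \eqref{QCR} (recall $b>0$, so $|b|=b$). First I would introduce the ladder operators
\[
a := \frac{1}{\sqrt{2b}}\,(b\,\KM_1 + \iu\,\KM_2)\,, \qquad a^* := \frac{1}{\sqrt{2b}}\,(b\,\KM_1 - \iu\,\KM_2)\,,
\]
which are closed and densely defined on the natural domain and, by a one--line computation from \eqref{QCR}, satisfy $[a,a^*]=\Id$. Expanding $a^*a$ and using $[\KM_1,\KM_2]=\iu\Id$ gives $b\,a^*a = \frac{1}{2}(b^2\KM_1^{\,2}+\KM_2^{\,2}) - \frac{b}{2}\Id$, hence $H_L = b(a^*a + \tfrac12\Id) = b(N+\tfrac12\Id)$ with $N:=a^*a \geq 0$.

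Next I would run the standard Fock--space argument for $N$: if $\psi$ is a normalised eigenvector of $N$ with eigenvalue $\lambda$, then $\norm{a\psi}^2=\lambda$, $\norm{a^*\psi}^2=\lambda+1$, and (when nonzero) $a\psi$, $a^*\psi$ are eigenvectors of eigenvalues $\lambda-1$, $\lambda+1$; since $N\geq 0$, iterating $a$ must terminate, so $\lambda\in\N$ and $\ker a \neq \{0\}$. To identify $\ker a$, I would rewrite $b\KM_1+\iu\KM_2 = (P_1+\iu P_2) + \frac{b}{2}(X_2-\iu X_1)$, which in the complex coordinate $z=x_1+\iu x_2$ equals $-\iu\,(2\partial_{\bar z}+\frac{b}{2}z)$ up to the constant; a direct check shows that $\phi_0(\x)=\bigl(|b|/2\pi\bigr)^{1/2}e^{-|b||\x|^2/4}$ --- which lies in $L^2(\R^2)$ with unit norm --- satisfies $a\phi_0=0$, hence $H_L\phi_0 = \frac{b}{2}\phi_0 = E_0\phi_0$. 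Applying powers of $a^*$ to $\phi_0$ then produces eigenvectors at every level $E_n = b(n+\tfrac12) = |b|(n+\tfrac12)$, so $\{E_n\}_{n\in\N}$ consists of eigenvalues of $H_L$.

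It remains to show that $\sigma(H_L)$ contains nothing more, in particular no continuous spectrum. Here I would invoke the Stone--von Neumann uniqueness theorem for the two mutually commuting canonical pairs $(\KM_1,\KM_2)$ and (a suitable rescaling of) $(\OC_1,\OC_2)$ of \eqref{QCR}: these act irreducibly on $L^2(\R^2)$, so there is a unitary $U\colon L^2(\R^2)\to L^2(\R)\otimes L^2(\R)$ intertwining $(\KM_1,\KM_2)$ with $(Q\otimes\Id,\,P\otimes\Id)$, where $[Q,P]=\iu\Id$, and $(\OC_1,\OC_2)$ with operators of the form $\Id\otimes(\,\cdot\,)$. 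Consequently $U H_L U^* = H\sub{osc}\otimes\Id$, with $H\sub{osc}=\frac12(P^2+b^2Q^2)$ the one--dimensional harmonic oscillator, whose spectrum is exactly $\{b(n+\tfrac12):n\in\N\}$ with simple eigenvalues and Hermite eigenfunctions $h_n$. Therefore $\sigma(H_L)=\{E_n\}_{n\in\N}$ is a discrete subset of $\R$, each $E_n$ being an eigenvalue with eigenspace $U^*\!\bigl(\{h_n\}\otimes L^2(\R)\bigr)$ of infinite multiplicity.

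The main obstacle is precisely this last step: since every Landau level is infinitely degenerate, $H_L$ does \emph{not} have compact resolvent, so one cannot close the spectral gaps by the usual Rellich--type compactness argument; the role of the cyclotron--centre operators $\OC_1,\OC_2$ and of Stone--von Neumann (equivalently, of the explicit unitary onto the Segal--Bargmann representation discussed below) is essential both to exclude extra spectrum and to fix the multiplicities. The remaining points --- that $a,a^*$ are closed with $C_0^\infty(\R^2)$ (or $\Span\{(a^*)^n\phi_0\}_{n\in\N}$) a core, and the explicit verification $a\phi_0=0$ --- are routine computations.
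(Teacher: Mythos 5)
Your proposal follows essentially the same route as the paper: the same ladder operators $A=\frac{1}{\sqrt{2b}}(b\KM_1+\iu\KM_2)$, the identity $H_L=b(A^*A+\tfrac12\Id)$, the standard Fock-space argument showing $\sigma_{\rm pp}(A^*A)=\N$, and the explicit verification that $\phi_0$ spans a nonzero part of $\ker A$. The computations you indicate (the commutator $[A,A^*]=\Id$, the rewriting of $A$ as $2\partial_{\bar z}+\frac{b}{2}z$ up to constants, the normalization of $\phi_0$) all check out.

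The one place where you genuinely diverge is the final step, namely that $\sigma(H_L)$ contains nothing beyond the point spectrum $\{E_n\}_{n\in\N}$. The paper simply states that ``a completeness argument, which we omit for the sake of brevity'' closes this gap, whereas you supply an actual argument: the Stone--von Neumann theorem applied to the two commuting canonical pairs $(\KM_1,\KM_2)$ and $(b\OC_1,\OC_2)$ yields a unitary under which $H_L$ becomes $H_{\mathrm{osc}}\otimes\Id$ on $L^2(\R)\otimes L^2(\R)$, from which discreteness of the spectrum and the infinite multiplicities follow at once. This is a legitimate and arguably cleaner way to finish (it also proves Proposition \ref{InfiniteDeg} as a by-product); the only point you should make explicit is that Stone--von Neumann requires the Weyl (exponentiated) form of the commutation relations for the quadruple, which here follows from the explicit product formulas in Proposition \ref{StoneT} rather than from the formal commutators \eqref{QCR} alone. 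Your closing remark that $H_L$ does not have compact resolvent, so that a Rellich-type argument is unavailable, is a correct and worthwhile observation that the paper does not make.
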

For comparison purposes, we briefly sketch the main steps of the proof of the previous proposition.
\begin{proof}
	Recall that, with our conventions, one has $b>0$. Consider the ladder operators
	\begin{align*}
	&A=\frac{1}{\sqrt{2b}} \left(b \KM_1 + \iu \KM_2 \right) \\
	&A^* = \frac{1}{\sqrt{2b}} \left(b \KM_1 - \iu \KM_2 \right) \, .
	\end{align*}
	One checks that $\left[A,A^*\right]=\Id$ and 
	$$
	H_L= b \left(A^* A + \frac{1}{2} \right) \, .
	$$
	This is exactly what happens in the case of the harmonic oscillator, where the positive operator $A^* A$, namely the number operator, has discrete spectrum, $\sigma(A^* A)=\N$, and the action of  $A^*$, the raising operator (resp.$\ A$, the lowering operator) is to \virg{move up} (resp.$\ $\virg{move down}) in the spectrum. Namely, if $\varphi$ is such that $A^* A \varphi = n\varphi$ then $(A^* A) A^* \varphi=(n+1)A^* \varphi$ and $(A^* A) A \varphi=(n-1)A \varphi$. Therefore it suffices to show that the kernel of the number operator contains non-null vectors, namely that there exists a function $\varphi_0$ such that
	\begin{equation}
	\label{AZero}
	A \varphi_0 = 0 \, .
	\end{equation}
	One can easily check that the function defined in \eqref{SolutionAZero} satisfies \eqref{AZero}.
	
	The previous argument shows that $\N \subset \sigma_{\rm pp}(A^*A)$. The inverse inclusion is obtained by means of a simple argument. Suppose that there exists a vector $\varphi_{\lambda}\neq 0$ such that $A^*A \varphi_{\lambda}= \lambda \varphi_{\lambda}$, $\lambda>0$. Then for $m\in \N$, $m>\lambda$ we have only two cases: either $A^m \varphi_{\lambda} \neq 0$ or $A^m \varphi_{\lambda} = 0$. Since 
	$$
	A^* A \left( A^m \varphi_{\lambda} \right) = (\lambda-m) \left( A^m \varphi_{\lambda} \right) \, ,
	$$
	the first case implies the existence of a negative eigenvalue, which is impossible in view of the non-negativity of $A^* A$. Regarding the second case, let $\varphi_{\lambda-m+n}:= A^{m-n} \varphi_{\lambda}$, where $n>0 $ is the smallest integer such that $A^{m-n} \varphi_{\lambda} \neq 0$. Notice that $n\leq m$ by hypothesis. Then $A^* A\varphi_{\lambda-m+n}= (\lambda -m +n ) \varphi_{\lambda-m+n}$ with $\varphi_{\lambda-m+n}\neq 0$. By acting with the linear operator $A^*$ on $A \varphi_{\lambda-m+n} = 0$ one gets
	$$
	0 = A^* A \varphi_{\lambda-m+n} = (\lambda -m +n ) \varphi_{\lambda-m+n} \, ,
	$$ 
	which forces $\lambda$ to be equal to $m-n \in \N$. This proves that $ \sigma_{\rm pp}(A^*A)=\N$.
	
	A completeness argument, which we omit for the sake of brevity, shows that $\sigma(A^*A)= \sigma_{\rm pp}(A^*A)=\N$, which concludes the proof.
\end{proof}

Notice that, even if the Landau Hamiltonian has the same spectrum as the harmonic oscillator Hamiltonian, the spectral type is different.
\begin{prop}
	\label{InfiniteDeg}
	Each eigenvalue $E_n$ of $H_L$ is infinitely degenerate. Hence the spectrum of $H_L$ is purely essential spectrum.
\end{prop}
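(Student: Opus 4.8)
The key structural fact is that, besides the \virg{dynamical} pair $(\KM_1,\KM_2)$ out of which $H_L$ is built, the quantum cyclotron–centre coordinates $(\OC_1,\OC_2)$ form a \emph{second} canonical pair which, by the commutation relations \eqref{QCR}, commutes with $\KM_1,\KM_2$, hence with $H_L=\tfrac12(\KM_2^2+b^2\KM_1^2)$ and with the ladder operators $A,A^*$ of the proof of Proposition~\ref{SpectrumHL}. The plan is to turn this extra symmetry into an explicit infinite orthogonal system inside each eigenspace. Since $[\OC_1,\OC_2]=-\tfrac{\iu}{b}\Id$, I would introduce the \virg{guiding–centre} ladder operators $B:=\sqrt{b/2}\,(\OC_1-\iu\OC_2)$ and $B^*=\sqrt{b/2}\,(\OC_1+\iu\OC_2)$, which satisfy $[B,B^*]=\Id$ and commute with $H_L$ and with $A,A^*$ (all this on the invariant core $C^\infty_0(\R^2)$, equivalently on the Schwartz class, which the first–order operators with polynomial coefficients $\KM_j,\OC_j$ preserve).

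Next, starting from the Gaussian ground state $\phi_0$ of \eqref{SolutionAZero}, a direct computation gives $B\phi_0=0$, while the proof of Proposition~\ref{SpectrumHL} shows $(A^*)^n\phi_0\in\ker(H_L-E_n)=:\mathcal{H}_n$ with $(A^*)^n\phi_0\neq0$ (indeed $\|(A^*)^n\phi_0\|^2=n!$). Setting $\psi_{n,k}:=(B^*)^k(A^*)^n\phi_0$ for $k\in\N$, the relation $[H_L,B^*]=0$ forces $\psi_{n,k}\in\mathcal{H}_n$, and since $B(A^*)^n\phi_0=(A^*)^nB\phi_0=0$ the usual oscillator algebra yields $B^*B\,\psi_{n,k}=k\,\psi_{n,k}$ and $\|\psi_{n,k}\|^2=k!\,n!$. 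Being eigenvectors of the self–adjoint operator $B^*B$ for the pairwise distinct eigenvalues $0,1,2,\dots$, the $\psi_{n,k}$, $k\in\N$, are linearly independent (and can be orthonormalised), so $\dim\mathcal{H}_n=\infty$. Finally, by Proposition~\ref{SpectrumHL} one has $\sigma(H_L)=\{E_n:n\in\N\}$; the discrete spectrum consists of the isolated eigenvalues of finite multiplicity, and since every $E_n$ is infinitely degenerate, $\sigma_{\mathrm{disc}}(H_L)=\emptyset$, whence $\sigma_{\mathrm{ess}}(H_L)=\sigma(H_L)$.

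There is no deep obstacle here — the infinite degeneracy is essentially forced by the non–existence of finite–dimensional representations of the Heisenberg algebra — so the only point requiring care is bookkeeping: making sure the purely algebraic manipulations with $A,A^*,B,B^*$ all take place on a single invariant domain on which the commutators hold in the strong sense, which is why I would work throughout on $C^\infty_0(\R^2)$ (or on $\mathcal{S}(\R^2)$). I would also record a shorter, representation–independent variant, in the spirit of Section~\ref{Sec:Balian-Low}: the unitaries $e^{\iu s\OC_1}$ and $e^{\iu t\OC_2}$ commute with $H_L$, hence restrict to unitary operators on $\mathcal{H}_n$, and from $[\OC_1,\OC_2]=-\tfrac{\iu}{b}\Id$ they obey the Weyl relation $e^{\iu s\OC_1}e^{\iu t\OC_2}=e^{\iu st/b}\,e^{\iu t\OC_2}e^{\iu s\OC_1}$; if $\dim\mathcal{H}_n=d<\infty$, taking determinants on $\mathcal{H}_n$ would give $e^{\iu std/b}=1$ for all $s,t\in\R$, which is impossible for $d\geq1$ and $b\neq0$ — so $d=\infty$ as soon as one knows $\mathcal{H}_n\neq\{0\}$, again guaranteed by $(A^*)^n\phi_0\neq0$.
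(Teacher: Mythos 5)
Your proof is correct and follows essentially the same route as the paper: build the second pair of ladder operators $B,B^*$ from the guiding-centre coordinates $\OC_1,\OC_2$, note that they commute with $A,A^*$ and hence with $H_L$, and apply powers of $B^*$ to produce infinitely many mutually orthogonal eigenvectors of $B^*B$ inside each Landau level. You are somewhat more explicit than the paper about the higher levels (via $\psi_{n,k}=(B^*)^k(A^*)^n\phi_0$) and about the deduction $\sigma_{\mathrm{disc}}(H_L)=\emptyset$, and your alternative determinant argument from the Weyl relations is a valid, nice addition not present in the paper.
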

\begin{proof}
	Recall that, with our conventions, one has $b>0$. The operators $\OC_1$ and $\OC_2$ satisfy the commutation relation \eqref{QCR} and, in the same spirit of the proof of Proposition \ref{SpectrumHL}, we can use them to construct another set of ladder operators, namely
	\begin{align*}
	&B=\sqrt{\frac{b}{2}} \left(\OC_1 - \iu \OC_2 \right) \\
	&B^* = \sqrt{\frac{b}{2}} \left( \OC_1 + \iu \OC_2 \right) .
	\end{align*}
	One can easily check that $\left[B,B^* \right]= \Id$. Similarly to what happens for the ladder operators $A$ and $A^*$, we have that the number operator $B^* B$ has discrete spectrum and the role of the raising and lowering operators is played now by $B^*$ and $B$ respectively. 
	
	By direct computation one can check that the eigenfunction $\varphi_0$ defined in \eqref{AZero} satisfies also
	$$
	B \phi_0 = 0.
	$$
	This means that $\phi_0$ is in the kernel of the number operator $B^* B$. Since the operators $\OC_i$ commute with the operators $\KM_i$, we have that
	$$
	A^* A (B^*)^n \phi_0 = (B^*)^n A^* A  \phi_0 = 0 \, .
	$$
	Therefore all the infinite eigenvectors of the number operator $B^* B$ are eigenvectors of $H_L$ corresponding to the lowest Landau level. Since $B^*B$ is a selfadjoint operator, eigenvectors corresponding to different eigenvalues are orthogonal. This proves that the eigenvalue $E_0$ is infinitely degenerate. A similar argument shows that each Landau level $E_n$, $n \in \N$, is infinitely degenerate.
\end{proof}

Note that the expectation value of the number operator $B^*B$ on a given state \virg{measures} the distance from the origin of the state, since $\frac{2}{b}\left(B^*B+\frac{1}{2}\right)=\OC_1^2+\OC_2^2$ is the analogous of the square of 
the distance of the centre of the cyclotron orbit from the origin, in the classical theory.
Classically, there exist infinitely many orbits with the same cyclotron radius, which differ among each other by the position of the centre of the orbit. This infinite multiplicity of the solutions of the classical dynamics reflects into the infinite degeneracy of the Landau levels.

The eigenvector $\phi_0$ defined in Proposition \ref{SpectrumHL} is well-localized around the origin. Nevertheless, the action of the symmetry groups generated by $\OC_1$ and $\OC_2$ implies that, for a given energy, there is no preferred centre of localization. In other words, the energy does not depend on where the eigenfunction is localized. Indeed, given an eigenfunction $\psi_n$ of $H_L$ associated with the eigenvalue $E_n$, we have 
$$
H_L  e^{\iu \alpha \OC_i} \psi_n = e^{\iu \alpha \OC_i} H_L \psi_n =   E_n  e^{\iu \alpha \OC_i} \psi_n \, .
$$
Hence $e^{\iu \alpha \OC_i} \psi_n$ is an eigenfunction of $H_L$ associated with the eigenvalue $E_n$.

The general translation of the orbit centre is described in the next definition. 

\begin{dfn}
	For every vector $\boldsymbol{\alpha}=(\alpha_1,\alpha_2) \in \R^2$ we define the \emph{magnetic translation operator} associated to $\boldsymbol{\alpha}$ to be the following unitary operator 
	$$
	\tau^{(b)}_{\boldsymbol{\alpha}}:=e^{\iu b(\alpha_1 \OC_2  -  \alpha_2  \OC_1 )} \, . 
	$$
	
\end{dfn}

\subsection{Relation with the Segal-Bargmann representation}
\label{Sec:Segal}

Two strongly continuous families of unitary operators $\left\{U(t)\right\}_{t \in \R}$ and $\left\{V(s)\right\}_{s \in \R}$ on a given Hilbert space $\Hi$ satisfy the Weyl relations if
\begin{equation}
\label{WeylUV}
U(t)V(s)=e^{-\iu t s} V(s)U(t) \, , \qquad \forall \, t,s \in \R\, .
\end{equation}
Irreducible representations of the Weyl relations play an important role in the mathematical formulation of quantum mechanics. The basic example of irreducible representation of the Weyl relations is given by the unitary group associated with the canonical position operator $X$ and the canonical momentum operator $P$ acting in $L^2(\R)$. In this section we show that the Landau model provides an infinite number of irreducible representations of the Weyl relations, namely one for each Landau level. While we explicitly discuss the case of the Lowest Landau Level (LLL), the same argument holds true with minor modification also for any other Landau level.
\medskip

Consider the ladder operator $B$ defined in Proposition \ref{InfiniteDeg}. Explicitly, $B$ acts as
$$
\left(B\psi\right)(x,y)=\sqrt{\frac{1}{2b}} \left( \frac{\partial \psi }{\partial x_1}(x,y) -i\frac{\partial \psi}{\partial x_2} (x,y)  + \frac{b}{2} (x_1 -\iu x_2) \psi (x,y)\right) , \forall \psi \in C^\infty_0(\R^2) \, .
$$
Identifying $\R^2$ with the complex plane, namely $x_1 + \iu x_2=:z \in \C $, and setting
$$
\partial := \frac{1}{2} \frac{\partial}{\partial z}  =\frac{1}{2} \left( \frac{\partial }{\partial x_1} -i\frac{\partial}{\partial x_2} \right) \, , \quad  \bar{\partial} := \frac{1}{2} \frac{\partial}{\partial \bar{z}}= \frac{1}{2} \left( \frac{\partial }{\partial x_1} +i\frac{\partial}{\partial x_2}  \right)\, ,
$$ 
we have that
$$
B \psi(z) = \sqrt{\frac{1}{2b}} \left( 2\partial + \frac{b}{2} \bar{z} \right) \psi(z) \, 
$$
and similarly
$$
B^{*} \psi(z) = \sqrt{\frac{1}{2b}} \left( -  2\bar{\partial} + \frac{b}{2} z \right) \psi(z) \, .
$$

If we substitute $\psi$ with the eigenvector $\phi_0$ defined in \eqref{SolutionAZero}, we get that
$$
B \phi_0(z) = 0 \, , \qquad  B^{*} \phi_0(z) = \sqrt{\frac{b}{2}} \, z \phi_0(z) \, ,
$$
This means that the action of the raising operator $B^*$ on $\phi_0$ amounts to multiplication by $z$. Therefore, since the LLL is a closed subspace, we get that a generic function in the LLL is of the form 
$$
\psi(z)= f(z) \phi_0(z)
$$
where $f(z)$ is analytic and such that 
$$
\int_{\C} dz |f(z)|^2 |\phi_0(z)|^2 < \infty \, .
$$

From the commutation relation $\left[B,B^*\right]=\Id$ one can deduce that the action of $B$ and $B^*$ can be described only in terms of the analytic function $f$, that is
$$
\left(B f \phi_0\right)(z) = \sqrt{\frac{2}{b}} \left( \partial f(z) \right) \phi_0(z) \, , \qquad  	\left(B^{*} f \phi_0\right)(z) = \sqrt{\frac{b}{2}} \,  z f(z) \phi_0(z) \, .
$$

To be more precise, one considers the Gaussian measure $d\mu := N \expo^{-\frac{|b|}{4}|z|^2} dz$, with $N$ positive constant, and defines the weighted $L^2$-space
\begin{equation*}
L^2(\C,d\mu) := \left\{g: \C \to \C : \int_{\C}|g(z)|^2 d\mu < \infty\right\}
\end{equation*} 
endowed with the scalar product
\begin{equation*}
\langle f,g \rangle_{SB} := \int_{\C} \overline{f(z)} g(z) d\mu(z) \, .
\end{equation*}

\begin{dfn}[Segal \cite{Segal15}, Bargmann \cite{Ba1961}]
	Let $\Hol(\C)$ be the space of entire functions on $\C$. The Segal-Bargmann space $SB(\C)$ is defined as 
	\begin{equation*}
	SB(\C):=\left\{ g\in \Hol(\C) : \int_{\C}|g(z)|^2 d\mu(z) < \infty \right\} \;=\; L^2(\C,d\mu(z)) \cap \Hol(\C) \;.
	\end{equation*}
\end{dfn}
It is straightforward to identify the LLL and the Segal-Bargmann space via the unitary operator $U:\Pi_0 L^2(\R^2) \to SB(\C) $ defined by
$$
(U \psi)(z) = f(z) \, 
$$
where $\psi(x,y)=f(x,y) \phi_0 (x,y)$, and $\Pi_0$ denotes the projection onto the LLL. Therefore we obtain
$$
 U \Pi_0 B \Pi_0 U^* = \sqrt{\frac{2}{b}} \,  \partial  \, , \qquad U \Pi_0 B^* \Pi_0 U^* = \sqrt{\frac{b}{2}} \,  z \, .
$$
Notice that the operators $\OC_i$ are related to the operator $z$ and $\partial$ by the following relations
\begin{equation}
\label{UG}
\begin{aligned}
&U\Pi_0(B+B^*)\Pi_0U^* = U \Pi_0 (\sqrt{2b} \OC_1) \Pi_0 U^* =  \sqrt{\frac{1}{2b}} \left( 2 \partial + b z\right) \, , \\
&U\Pi_0(B^*-B)\Pi_0U^* = U\Pi_0 ( \iu \sqrt{2b} \OC_2) \Pi_0U^* =  \sqrt{\frac{1}{2b}} \left( b z- 2\partial\right) \,.
\end{aligned}
\end{equation}

The relations \eqref{UG} together with \cite[Theorem 14.16]{Hall13} allow to prove that the operators $e^{\iu t\OC_1}$, $t \in \R$, and $e^{\iu s\OC_2}$, $s \in \R$, form an irreducible representation of the Weyl relations. The irreducible representation space is provided by (the eigenspace corresponding to) the lowest Landau level.

We emphasize that the Segal-Bargmann representation, by means of its complex plane formalism, provides a simple and straightforward characterization of the lowest Landau level in terms of entire functions and, as a by-product, also the action of the operators $B$ and $B^*$ is extremely simplified. This formalism turned out to be useful in the study of Fractional Quantum Hall effect, in particular in studying the effect of external potentials acting on the Landau levels, see \cite{GiJa84, MoscolariPanati}.

\section{An abstract Balian-Low theorem}
\label{Sec:Balian-Low}
It emerges from the previous discussion that the pair of operators $\left(\OC_1,\OC_2\right)$ defined in the previous Section has some structural analogies with the canonical pair $\left(X,P\right)$, given by position and momentum operators in the Schr\"odinger representation. The essential structure is captured by the following abstract definitions of \emph{Gabor triple} and \emph{Generalized Gabor frame} (GGF).

Given any Hilbert space $\Hi$ endowed with the sesquilinear form $\langle \cdot , \cdot \rangle : \Hi \times \Hi \to \C$, consider two selfadjoint operators $\X$ and $\K$ (whose domains are denoted by $\Dom(\X)$ and $\Dom(\K)$, respectively), a dense subspace $\Dd \subset \Hi$ and a two dimensional lattice $\Gamma$ generated by the vectors  $\aU, \aD \in \R^2$ via
\begin{equation*}
\Gamma:= \left\{ \gamma=n \aU + m \aD  \in \R^2 : n,m \in \Z \right\} \subset \R^2 \, .
\end{equation*}
Our main Assumption is the following:
\begin{assumption}
	\label{Assumption}
	Assume that $\X,\K$ and $\Dd$ satisfy the following properties :
	\begin{enumerate}[label=(\roman*), ref=(\roman*)]
		\item  \label{Domain} $\Dd$ is a \emph{common core} for $\X$ and $\K$. This means that for every vector $\psi \in \Dom(\X)~\cap~\Dom(\K) $ there exists a sequence $\{\xi_i\} \subset \Dd$ such that \footnote{In the definition of \emph{common core} it is essential that the sequence $\xi_i \to \psi$, $i \to \infty$ provides convergence of both  $\left\{\X \xi_i\right\}_{i \in \N}$ and $\left\{\K \xi_i\right\}_{i \in \N}$. This is, in general, stronger than asking that $\Dd$ is a core for both $\X$ and $\K$. }
		\begin{equation}
		\label{ApproxD}
		\begin{aligned}
		&\xi_i \to \psi \\
		&\X \xi_i \to \X \psi \qquad \qquad \text{ as }  i \to \infty \\
		&\K \xi_i \to \K \psi \, ,
		\end{aligned}
		\end{equation}
		where the convergence is understood in the norm of $\Hi$.
		\item The operators $\X$ and $\K$ satisfy the Weyl commutation relations, that is, for all $t,s \in \R$
		\begin{equation}
		\label{Weylrelation}
		e^{\iu t \X} e^{\iu s \K} = e^{-\iu ts } e^{\iu s \K} e^{\iu t \X}  \, ,
		\end{equation}
		compare with equation \eqref{WeylUV}.
	\end{enumerate}
\end{assumption}

\begin{dfn}[Gabor triple]
	\label{Gabor3}
	A Gabor triple $\Gabor:=\left(\X,\K,\Dd\right)$ consists of two self-adjoint operators $\X$, $\K$ and a  dense subspace $\Dd \subset \Hi$ such that Assumption \ref{Assumption} holds true.
\end{dfn}

From the Weyl commutation relations we have the following well-known results.
\begin{prop}
	\label{PropT}
	From Assumption \ref{Assumption}.(ii) it follows that 
	\begin{enumerate}[label=(\roman*),ref=(\roman*)]
		\item The map $T: \C
		\to \mathcal{U}(\Hi)$ defined by
		\begin{equation*}
		T(z):=e^{\iu \frac{\re(z) \im(z)}{2}}e^{\iu \re(z) \X}e^{\iu \im(z) \K} \, 
		\end{equation*}
		for every $z\in \C$, defines a projective unitary representation of the additive group $\C$, that is
		\begin{equation*}
		T(z)T(z')=e^{-\iu \frac{\im  z \wedge z' }{2}}\, \,T(z+z') \, ,
		\end{equation*}
		where $ z \wedge z' :=\iu \left(\re{z}\im{z'}-\re{z'}\im{z}\right)$.
		
		\item \label{ProjRep} The restriction of the map $T$ to the lattice $\Gamma$ (after the identification of $\R^2$ with $\C$) is a projective unitary representation of $\Z^2$.
	\end{enumerate}
\end{prop}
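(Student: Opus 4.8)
The plan is to establish (i) by a direct computation relying only on the Weyl relation \eqref{Weylrelation} and on the one-parameter group laws $e^{\iu t\X}e^{\iu t'\X}=e^{\iu(t+t')\X}$ and $e^{\iu s\K}e^{\iu s'\K}=e^{\iu(s+s')\K}$ (these hold by Stone's theorem, since $\X,\K$ are self-adjoint). Every operator appearing here is an everywhere-defined unitary, so there are no domain subtleties, and $T(z)$ is visibly unitary for each $z\in\C$. Writing $z=t+\iu s$ and $z'=t'+\iu s'$, so that $T(z)=e^{\iu ts/2}\,e^{\iu t\X}e^{\iu s\K}$ and likewise for $T(z')$, I would expand the product $T(z)T(z')$, move the factor $e^{\iu t'\X}$ to the left past $e^{\iu s\K}$ using the rearranged Weyl relation $e^{\iu s\K}e^{\iu t'\X}=e^{\iu t's}\,e^{\iu t'\X}e^{\iu s\K}$, and then collapse the two $\X$-exponentials and the two $\K$-exponentials by the group law. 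This gives
\[
T(z)T(z')=e^{\iu ts/2}\,e^{\iu t's'/2}\,e^{\iu t's}\,e^{\iu(t+t')\X}e^{\iu(s+s')\K}.
\]

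Next, comparing the right-hand side with $T(z+z')=e^{\iu(t+t')(s+s')/2}\,e^{\iu(t+t')\X}e^{\iu(s+s')\K}$ yields $T(z)T(z')=e^{\iu\theta}\,T(z+z')$ with
\[
\theta=\frac{ts}{2}+\frac{t's'}{2}+t's-\frac{(t+t')(s+s')}{2}=\frac{t's-ts'}{2}.
\]
Since $z\wedge z'=\iu\,(ts'-t's)$ is purely imaginary with $\im(z\wedge z')=ts'-t's$, this reads $\theta=-\tfrac{1}{2}\,\im(z\wedge z')$, which is exactly the cocycle identity in the statement. If one also wishes to record that $z\mapsto T(z)$ is strongly continuous, as a genuine projective representation should be, this follows at once from the strong continuity of the two one-parameter groups together with a routine $\varepsilon/2$ estimate. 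I do not expect a genuine obstacle in this part: the only thing to watch is the bookkeeping of the scalar phase factors and the direction in which the Weyl relation is applied.

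For (ii) there is nothing left to compute. After identifying $\R^2$ with $\C$, the lattice $\Gamma$ is a subgroup of the additive group $\C$, and the assignment $\Z^2\ni(n,m)\mapsto n\,\aU+m\,\aD\in\Gamma$ is a group isomorphism. Precomposing $T$ with this isomorphism and restricting to $\Gamma$ therefore produces, by part (i), a map $\Z^2\to\mathcal{U}(\Hi)$ obeying the same cocycle relation, now governed by the $U(1)$-valued $2$-cocycle $(\gamma,\gamma')\mapsto e^{-\iu\,\im(\gamma\wedge\gamma')/2}$ on $\Z^2$; by definition this is a projective unitary representation of $\Z^2$.
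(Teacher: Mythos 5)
Your computation is correct and is exactly the standard argument the paper has in mind: the authors state Proposition \ref{PropT} as a well-known consequence of the Weyl relation \eqref{Weylrelation} without printing a proof, and your direct expansion of $T(z)T(z')$, the bookkeeping giving $\theta=\tfrac{1}{2}(t's-ts')=-\tfrac{1}{2}\im(z\wedge z')$, and the observation that part (ii) is just restriction of the cocycle to the subgroup $\Gamma\cong\Z^2$ all check out. No gaps.
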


\begin{dfn}[Generalized Gabor Frame]
	Consider a Gabor triple $\Gabor$, a lattice $\Gamma$ generated by the vectors $\aU, \aD \in \R^2$, a closed subspace $\mathcal{V} \subset \Hi$ and an element $\varphi_0 \in \mathcal{V}$. If the set 
	\begin{equation}
	\{\varphi_{m,n} \}_{m,n \in \Z}= \{ T(\aU)^{m} \, T(\aD)^{n} \varphi_0 \}
	\end{equation}
	is contained in $\mathcal{V}$, we call it a Generalized Gabor Frame (GGF) for $\mathcal{V}$ generated by $\varphi_0$ and associated to the Gabor triple $\Gabor$ and the lattice $\Gamma$.
\end{dfn}
Hereafter we make use of the short-hand notation $T_{m,n}:= T(\aU)^{m} \, T(\aD)^{n} $.
\begin{prop}
	\label{HeisenbergCR}
	Let $\Gabor=\left(\X,\K,\Dd\right)$ be a Gabor triple. Then $\X$ and $\K$ satisfy the weak canonical commutation relations, that is $\forall\, \psi,\varphi \in \Dom(\X)\cap \Dom(\K) $
	\begin{equation}
	\label{WeakCCR}
	\langle \X \psi, \K \varphi \rangle - \langle  \K \psi, \X \varphi \rangle= \iu 	\langle \psi, \varphi \rangle.
	\end{equation}
\end{prop}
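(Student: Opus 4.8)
The plan is to deduce the weak canonical commutation relations \eqref{WeakCCR} from the Weyl relations \eqref{Weylrelation} by differentiating twice at the origin, using Assumption \ref{Assumption}\ref{Domain} to justify the manipulations on the dense core $\Dd$ and then extending to all of $\Dom(\X)\cap\Dom(\K)$ by approximation. First I would fix $\psi,\varphi\in\Dd$ and consider the scalar function $F(t,s):=\langle e^{\iu t\X}\psi, e^{\iu s\K}\varphi\rangle$. Using the Weyl relation one can also write $e^{\iu t\X}e^{\iu s\K}=e^{\iu ts}e^{\iu s\K}e^{\iu t\X}$, so that $\langle\psi, e^{\iu t\X}e^{\iu s\K}\varphi\rangle$ can be evaluated in two ways; comparing the mixed second derivative $\partial_t\partial_s$ at $t=s=0$ of the two expressions produces exactly the term $\iu\langle\psi,\varphi\rangle$ as the derivative of the scalar prefactor $e^{\iu ts}$, while the remaining terms reorganize into $\langle\X\psi,\K\varphi\rangle-\langle\K\psi,\X\varphi\rangle$.

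Concretely, I would apply Stone's theorem: for $\psi\in\Dom(\X)$ the map $t\mapsto e^{\iu t\X}\psi$ is differentiable with derivative $\iu\X e^{\iu t\X}\psi$ at $t=0$, and similarly for $\K$. Thus from $e^{\iu t\X}e^{\iu s\K}\varphi=e^{\iu ts}\,e^{\iu s\K}e^{\iu t\X}\varphi$, pairing with $\psi$ and differentiating first in $s$ at $s=0$ gives
\[
\langle\psi, e^{\iu t\X}\,\iu\K\,\varphi\rangle=\iu t\,\langle\psi, e^{\iu t\X}\varphi\rangle+\langle\psi,\iu\K e^{\iu t\X}\varphi\rangle,
\]
for $\varphi\in\Dom(\K)$; then differentiating in $t$ at $t=0$, using $\varphi\in\Dom(\X)\cap\Dom(\K)$ and $\psi\in\Dom(\X)$, and moving the generators to the left factor of the inner product via selfadjointness (here $\langle\psi,\iu\X e^{\iu t\X}\K\varphi\rangle\to\langle\X\psi,\iu\K\varphi\rangle$ needs $\psi\in\Dom(\X)$ and $\K\varphi\in\Hi$), yields \eqref{WeakCCR} for $\psi,\varphi$ in a suitable core. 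The slightly delicate point is that to differentiate $t\mapsto e^{\iu t\X}\K\varphi$ one needs $\K\varphi\in\Dom(\X)$, which is where the common-core hypothesis enters: one first establishes the identity for $\psi,\varphi\in\Dd$ (a dense set on which all the compositions are well defined and the differentiations are legitimate), and only afterward extends.

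Finally I would pass to general $\psi,\varphi\in\Dom(\X)\cap\Dom(\K)$ by Assumption \ref{Assumption}\ref{Domain}: pick sequences $\{\xi_i\},\{\eta_i\}\subset\Dd$ with $\xi_i\to\psi$, $\X\xi_i\to\X\psi$, $\K\xi_i\to\K\psi$ and likewise for $\eta_i\to\varphi$; each term in \eqref{WeakCCR} is continuous with respect to this mode of convergence (e.g.\ $\langle\X\xi_i,\K\eta_i\rangle\to\langle\X\psi,\K\varphi\rangle$ since both factors converge in norm), so the identity for $\xi_i,\eta_i$ passes to the limit. I expect the main obstacle to be purely technical bookkeeping of domains in the double differentiation — ensuring that every vector to which a generator is applied genuinely lies in the appropriate domain — which is precisely the reason the argument is first run on $\Dd$ and the notion of common core (rather than mere core) is imposed in Assumption \ref{Assumption}.
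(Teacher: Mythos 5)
Your overall strategy --- differentiate the Weyl relation \eqref{Weylrelation} twice at the origin and then pass to general vectors by the common-core approximation --- is the same as the paper's, but your handling of the domains contains a genuine gap. After your first differentiation you are left with the term $\langle\psi,\iu\K e^{\iu t\X}\varphi\rangle$, and you correctly observe that differentiating it in $t$ seems to require $\K\varphi\in\Dom(\X)$ (more generally, that compositions of the generators be defined). Your proposed remedy is to first prove the identity for $\psi,\varphi\in\Dd$, ``a dense set on which all the compositions are well defined''. This is not available in the abstract setting: Assumption \ref{Assumption}\ref{Domain} only says that $\Dd$ is a \emph{common core}, which is purely an approximation property; it does \emph{not} say that $\Dd$ is invariant under $\X$ and $\K$, nor that $\K\Dd\subset\Dom(\X)$ or $\X\Dd\subset\Dom(\K)$, so products such as $\X\K$ need not be defined on $\Dd$ and the differentiations you invoke there are exactly as problematic as for general vectors. (In the Landau application $\Dd=C_0^\infty(\R^2)$ happens to have these extra properties, but the Proposition is stated for an arbitrary Gabor triple.)

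The fix --- and this is what the paper does --- is to rearrange \emph{before} differentiating so that no composition of unbounded operators ever appears: write the Weyl relation in the split form
\begin{equation*}
\langle e^{-\iu t\X}\xi,\, e^{\iu s\K}\zeta\rangle \;=\; e^{-\iu ts}\,\langle e^{-\iu s\K}\xi,\, e^{\iu t\X}\zeta\rangle\,,
\end{equation*}
so that on each side the $t$-group and the $s$-group each act on a single slot of the inner product. Then $\partial_s\partial_t$ at $(0,0)$ requires only $\xi,\zeta\in\Dom(\X)\cap\Dom(\K)$ (Stone's theorem applied once per slot), and one obtains $\langle\X\xi,\K\zeta\rangle=\langle\K\xi,\X\zeta\rangle+\iu\langle\xi,\zeta\rangle$ directly; the $\Dd$-approximation is then used only to pass to the limit. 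You already have the germ of this fix when you mention moving generators to the other factor by selfadjointness --- you just need to do it systematically at the level of the unitary groups, rather than after operators like $\X e^{\iu t\X}\K\varphi$ have appeared. Finally, a small but consequential slip: you restate the Weyl relation as $e^{\iu t\X}e^{\iu s\K}=e^{+\iu ts}e^{\iu s\K}e^{\iu t\X}$, with the opposite sign to \eqref{Weylrelation}; carried through, this yields $-\iu\langle\psi,\varphi\rangle$ on the right-hand side of \eqref{WeakCCR} instead of $+\iu\langle\psi,\varphi\rangle$.
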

\begin{proof}
	Consider two vectors $\psi,\varphi \in \Dom(\X)\cap \Dom(\K) $. By Assumption \ref{Assumption} \ref{Domain} we know that there exist two sequences $\left\{\xi_i\right\}_{i\in \N} \subset \Dd $ and $\left\{\zeta_i\right\}_{i\in \N} \subset \Dd$ satisfying \eqref{ApproxD} for $\psi$ and $\varphi$, respectively. From \eqref{Weylrelation} we have
	\begin{equation}
	\label{Aux1}
	\langle e^{-\iu t \X} \xi_i ,  e^{\iu s \K} \zeta_i \rangle = e^{-\iu ts } \langle e^{-\iu s \K}  \xi_i,  e^{\iu t \X} \zeta_i \rangle \, .
	\end{equation}
	Define now $F_i(t,s):=\langle e^{-\iu t \X} \xi_i ,  e^{\iu s \K} \zeta_i \rangle $ and $\widetilde{F}_i(t,s):=e^{-\iu ts } \langle e^{-\iu s \K}  \xi_i,  e^{\iu t \X} \zeta_i \rangle$. By Stone's Theorem and the hypothesis on $\xi_i, \zeta_i$ we can differentiate both sides of equation \eqref{Aux1}, obtaining $\partial_t F_i(t,s) = \partial_t \widetilde{F}_i(t,s)$, that is
	$$
	\iu \langle e^{-\iu t \X} \, \X \, \xi_i ,  e^{\iu s \K} \zeta_i \rangle = \iu e^{-\iu ts } \langle e^{-\iu s \K}  \xi_i,  e^{\iu t \X} \X \zeta_i \rangle - \iu s e^{-\iu ts } \langle e^{-\iu s \K}  \xi_i,  e^{\iu t \X} \zeta_i \rangle \, .
	$$
	Applying again Stone's Theorem, we differentiate in $s$, getting
	\begin{equation*}
	\begin{aligned}
	- \langle e^{-\iu t \X} \,\X\, \xi_i ,  e^{\iu s \K} \,\K \,\zeta_i \rangle =& - e^{-\iu ts } \langle e^{-\iu s \K} \, \K\,  \xi_i,  e^{\iu t \X}\, \X\, \zeta_i \rangle + t e^{-\iu ts } \langle e^{-\iu s \K} \, \xi_i,  e^{\iu t \X} \,\X\, \zeta_i \rangle  \\
	& -\iu e^{-\iu ts } \langle e^{-\iu s \K}  \xi_i,  e^{\iu t \X} \zeta_i \rangle  - s t e^{-\iu ts } \langle e^{-\iu s \K}  \xi_i,  e^{\iu t \X} \zeta_i \rangle\\
	& +  s e^{-\iu ts } \langle e^{-\iu s \K} \K  \xi_i,  e^{\iu t \X} \zeta_i \rangle \, .
	\end{aligned}
	\end{equation*}
	Hence we get $\partial_s \partial_t F_i(t,s) = \partial_s \partial_t \widetilde{F}_i(t,s)$. Evaluating the derivatives at the point $(t,s)=(0,0)$ we obtain
	$$
	 \langle \X\, \xi_i ,  \K \,\zeta_i \rangle =  \langle\, \K\,  \xi_i,  \, \X\, \zeta_i \rangle + \iu  \langle   \xi_i,  \zeta_i \rangle \, .
	$$
	Performing the limit $i \to +\infty$ and taking into account \eqref{ApproxD}, one concludes the proof.
\end{proof}

\begin{thm}[Balian \cite{Bal81}, Low\cite{Low89}, Battle \cite{Bat88}]
	\label{BalianLowTheorem}
	Given a Gabor triple $\Gabor=\left(\X,\K,\Dd\right)$ and a lattice $\Gamma \subset \R^2$, consider a GGF for $\mathcal{V}$ generated by $\varphi_0 \in \mathcal{V}$. Moreover suppose that $\mathcal{V}$ is an invariant subspace for the operators $\X$ and $\K$, that is for every $\varphi \in \mathcal{V}$
	$$
	e^{\iu s \X} \varphi \in \mathcal{V} \, , \qquad  e^{\iu s \K} \varphi \in \mathcal{V} \,.
	$$  
	If the elements of the GGF form a complete and orthonormal system for $\mathcal{V}$ then either $\varphi_0 \notin \Dom(\X)$ or $\varphi_0 \notin  \Dom(\K)$. 
\end{thm}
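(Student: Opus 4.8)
The plan is to argue by contradiction, deriving from the assumption $\varphi_0 \in \Dom(\X)\cap\Dom(\K)$ a violation of the weak canonical commutation relation of Proposition~\ref{HeisenbergCR}. Since an orthonormal GGF has $\norm{\varphi_0}=\norm{\varphi_{0,0}}=1$, taking $\psi=\varphi=\varphi_0$ in~\eqref{WeakCCR} gives $\langle\X\varphi_0,\K\varphi_0\rangle-\langle\K\varphi_0,\X\varphi_0\rangle=\iu$, and since $\langle\K\varphi_0,\X\varphi_0\rangle=\overline{\langle\X\varphi_0,\K\varphi_0\rangle}$ this forces $\im\langle\X\varphi_0,\K\varphi_0\rangle=\tfrac12\neq0$. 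The whole proof then consists in establishing the opposite, namely that $\langle\X\varphi_0,\K\varphi_0\rangle$ is in fact \emph{real}; this is Battle's argument, carried out here using only the Weyl relations~\eqref{Weylrelation}, which is precisely what makes it representation-independent.

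First I would set up a Fourier expansion of $\X\varphi_0$ and $\K\varphi_0$ in the GGF. Differentiating~\eqref{Weylrelation} exactly as in the proof of Proposition~\ref{HeisenbergCR}, one obtains the intertwining identities $T(\gamma)^{-1}\X\,T(\gamma)=\X-\im\gamma$ and $T(\gamma)^{-1}\K\,T(\gamma)=\K+\re\gamma$ for $\gamma\in\Gamma$ (identified with a complex number); in particular every Weyl unitary, and hence every $T_{m,n}$, maps $\Dom(\X)\cap\Dom(\K)$ into itself, so $\varphi_{m,n}\in\Dom(\X)\cap\Dom(\K)$ for all $m,n$. Because $\mathcal{V}$ is closed and invariant under $\{e^{\iu s\X}\}_{s}$ and $\{e^{\iu s\K}\}_{s}$, the limits $\X\varphi_0=\lim_{s\to0}(e^{\iu s\X}\varphi_0-\varphi_0)/(\iu s)$ and $\K\varphi_0$ belong to $\mathcal{V}$; hence, by completeness and orthonormality of the GGF, $\X\varphi_0=\sum_{m,n}a_{m,n}\,\varphi_{m,n}$ and $\K\varphi_0=\sum_{m,n}b_{m,n}\,\varphi_{m,n}$ with $a_{m,n}:=\langle\varphi_{m,n},\X\varphi_0\rangle$ and $b_{m,n}:=\langle\varphi_{m,n},\K\varphi_0\rangle$ square-summable, and $\langle\X\varphi_0,\K\varphi_0\rangle=\sum_{m,n}\overline{a_{m,n}}\,b_{m,n}$ with absolutely convergent series (Cauchy--Schwarz).

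The heart of the argument is a reflection symmetry of these coefficients. Writing $T_{m,n}=\omega_{m,n}\,T(\gamma_{m,n})$ with $\gamma_{m,n}=m\aU+n\aD$ and $|\omega_{m,n}|=1$ (Proposition~\ref{PropT}), the intertwining identities give $\X\varphi_{m,n}=T_{m,n}\X\varphi_0-\im(\gamma_{m,n})\,\varphi_{m,n}$ and $\K\varphi_{m,n}=T_{m,n}\K\varphi_0+\re(\gamma_{m,n})\,\varphi_{m,n}$. Using self-adjointness of $\X$ together with $\langle\varphi_{m,n},\varphi_0\rangle=0$ for $(m,n)\neq(0,0)$, one gets $a_{m,n}=\langle\X\varphi_{m,n},\varphi_0\rangle=\langle T_{m,n}\X\varphi_0,\varphi_0\rangle=\langle\X\varphi_0,T_{m,n}^{-1}\varphi_0\rangle$; since $T_{m,n}^{-1}=\epsilon_{m,n}\,T_{-m,-n}$ for a unimodular phase $\epsilon_{m,n}$ (again Proposition~\ref{PropT}), this yields $\overline{a_{m,n}}=\overline{\epsilon_{m,n}}\,a_{-m,-n}$, and identically $\overline{b_{m,n}}=\overline{\epsilon_{m,n}}\,b_{-m,-n}$, for all $(m,n)\neq(0,0)$. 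Substituting both relations into $\sum_{(m,n)\neq(0,0)}\overline{a_{m,n}}\,b_{m,n}$, the phases cancel ($\overline{a_{m,n}}\,b_{m,n}=a_{-m,-n}\,\overline{b_{-m,-n}}$) and, after relabelling $(m,n)\mapsto(-m,-n)$, this off-diagonal sum is seen to equal its own complex conjugate; the diagonal term $\overline{a_{0,0}}\,b_{0,0}$ is real since $a_{0,0}=\langle\varphi_0,\X\varphi_0\rangle$ and $b_{0,0}=\langle\varphi_0,\K\varphi_0\rangle$ are real by self-adjointness. Therefore $\langle\X\varphi_0,\K\varphi_0\rangle\in\R$, contradicting $\im\langle\X\varphi_0,\K\varphi_0\rangle=\tfrac12$, and so $\varphi_0\notin\Dom(\X)$ or $\varphi_0\notin\Dom(\K)$.

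I expect the main difficulty to be technical rather than conceptual: justifying the intertwining identities, and the consequent domain stability $\varphi_{m,n}\in\Dom(\X)\cap\Dom(\K)$, starting from the \emph{integrated} Weyl relations~\eqref{Weylrelation} while working with \emph{unbounded} self-adjoint operators, and keeping careful track of the projective cocycle phases $\omega_{m,n},\epsilon_{m,n}$ so that they genuinely cancel in the final summation. It is worth noting that the invariance hypothesis on $\mathcal{V}$ enters exactly once and is essential there, namely to guarantee that $\X\varphi_0,\K\varphi_0\in\mathcal{V}$, so that the Parseval expansion over the GGF is legitimate.
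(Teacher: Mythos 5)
Your proposal is correct and follows essentially the same route as the paper's proof (Battle's argument): after reducing to $\varphi_0\in\Dom(\X)\cap\Dom(\K)$, both expand $\langle\X\varphi_0,\K\varphi_0\rangle$ over the orthonormal GGF via Parseval, use the Weyl relations together with $\varphi_{m,n}\perp\varphi_0$ to obtain the $(m,n)\mapsto(-m,-n)$ reflection symmetry with cancelling cocycle phases, conclude $\langle\X\varphi_0,\K\varphi_0\rangle=\langle\K\varphi_0,\X\varphi_0\rangle$, and contradict the weak canonical commutation relation of Proposition~\ref{HeisenbergCR}. The only differences are presentational: you encode the symmetry in the Fourier coefficients via the conjugated-operator identities $T(\gamma)^{-1}\X T(\gamma)=\X-\im\gamma$ (which requires $\varphi_{m,n}\in\Dom(\X)\cap\Dom(\K)$), while the paper differentiates the intertwined unitary-group matrix elements directly, needing only $\varphi_0$ in the domains.
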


The latter claim is usually written, especially in the physics literature, as
\begin{equation}
\label{ThBL}
\|\X\varphi_0 \| \|\K\varphi_0 \| = + \infty \, . 
\end{equation}

\noindent Notice, however, that the previous equation makes no sense in an abstract setting. To justify the appearance of \virg{$+\infty$} in \eqref{ThBL}, we notice that - whenever the thesis  of Theorem \ref{BalianLowTheorem} holds true - there exist two sequences of vectors $\{\varphi_i \}_{i\in \N} \subset \Dom(\X)$, with $\varphi_i \to \varphi_0$ as $i \to \infty$, and $\{ \widetilde{\varphi}_j \}_{j\in \N} \subset \Dom(\K)$, with $\widetilde{\varphi}_j \to \varphi_0$ as $j \to \infty$, such that either the sequence $\{ \|\X \varphi_i\| \}_{i\in \N}$ is unbounded or the sequence $\{ \|\K \widetilde{\varphi}_j\| \}_{j\in \N}$ is unbounded.

\bigskip

As we anticipated in the Introduction, we provide a slight generalization of Battle's proof of Balian-Low Theorem. Our proof is representation-independent, in the sense that does not exploit the Stone-von Neumann uniqueness theorem. 
\begin{proof}
	We prove the theorem via reductio ad absurdum. By contradiction, suppose that there exists $c< +\infty$ such that $\|\X\varphi_0 \| \|\K \varphi_0\| = c$ (this means, equivalently, that $\varphi_0 \in \Dom(\X) \cap \Dom(\K)$).  From the invariance property of $\mathcal{V}$ follows that
	$$
	\left(e^{\iu t \X} \varphi_0 - \varphi_0\right) \in \mathcal{V}   \, , \qquad  \left(e^{\iu s \K} \varphi_0 - \varphi_0\right) \in \mathcal{V}  \, .
	$$
	Hence $\X \varphi_0, \K\varphi_0 \in \mathcal{V}$ and we have that 
	\begin{equation}
	\label{starting}
	\begin{aligned}
	\langle \X \varphi_0, \K \varphi_0 \rangle &= \sum_{m,n \in \Z} \langle \X \varphi_0, T_{m,n}\varphi_0 \rangle \langle T_{m,n}\varphi_0, \K \varphi_0 \rangle\, . \\
	\end{aligned}
	\end{equation}
	Let us now prove that $\langle \X \varphi_0, T_{m,n}\varphi_0 \rangle \langle  T_{m,n}  \varphi_0, \K \varphi_0 \rangle=\langle \K \varphi_0, T_{-m,-n}\varphi_0 \rangle \langle  T_{-m,-n}  \varphi_0, \X \varphi_0 \rangle$. By \eqref{Weylrelation} we have that
	\begin{equation}
	\label{Aux2}
	e^{\iu s \X} T_{m,n} e^{-\iu s\X} = e^{-\iu s \left( m a_2 + n b_2 \right)} T_{m,n} \, .
	\end{equation}
	Moreover, since $T$ restricted to $\Gamma$ is a projective unitary representation of $\Z^2$, see Proposition \ref{PropT} , we have that
	\begin{equation}
	\label{Aux3}
	T^*_{m,n}= e^{\iu \left(m a_1 n b_2- m a_2 n b_1\right)} T_{-m,-n} \, .
	\end{equation}
	Putting together \eqref{Aux2} and \eqref{Aux3} we obtain
	\begin{equation}
	\label{Aux4}
	\langle e^{-\iu s \X} \varphi_0 , T_{m,n} \varphi_0 \rangle = e^{-\iu s \left( m a_2 + n b_2 \right)} e^{-\iu \left(m a_1 n b_2- m a_2 n b_1\right)} 	\langle T_{-m,-n} \varphi_0 , e^{\iu s \X } \varphi_0 \rangle \, .
	\end{equation}
	Using the same strategy as in the proof of Proposition \ref{HeisenbergCR}, we differentiate by $s$ both sides of \eqref{Aux4}. Evaluating at $s=0$ we get
	$$
	\begin{aligned}
	 \langle \X \varphi_0 , T_{m,n} \varphi_0 \rangle &=   e^{-\iu \left(m a_1 n b_2- m a_2 n b_1\right)}	\langle T_{-m,-n} \varphi_0 , \X \varphi_0 \rangle\\
	 &\phantom{=} - \left( m a_2 + n b_2 \right) e^{-\iu \left(m a_1 n b_2- m a_2 n b_1\right)} \langle T_{-m,-n} \varphi_0 , \varphi_0 \rangle\, .
	\end{aligned}
	$$
	Since $T_{mn} \varphi_0 \perp \varphi_0$, for every $(m,n)\neq(0,0)$, we obtain
	$$
	\langle \X \varphi_0 , T_{m,n} \varphi_0 \rangle =   e^{-\iu \left(m a_1 n b_2- m a_2 n b_1\right)}	\langle T_{-m,-n} \varphi_0 , \X \varphi_0 \rangle \, .
	$$
	The same argument shows also that
	$$
	\langle T_{m,n} \varphi_0 , \K \varphi_0 \rangle =   e^{\iu \left(m a_1 n b_2- m a_2 n b_1\right)}	\langle  \K \varphi_0 , T_{-m,-n}\varphi_0 \rangle \, .
	$$
	Therefore, by phase cancellation, we conclude from \eqref{starting} that 
	\begin{equation*}
	\begin{aligned}
	\langle\X \varphi_0, \K \varphi_0 \rangle &= \sum_{m,n \in \Z} \langle T_{-m,-n}\varphi_0, \X \varphi_0 \rangle \langle \K \varphi_0, T_{-m,-n} \varphi_0 \rangle\\
	&= \langle\K \varphi_0, \X \varphi_0 \rangle \, .
	\end{aligned}
	\end{equation*}
	Consider now a sequence $\xi_i$ that satisfies item (i) of Assumption \ref{Assumption} with $\psi=\varphi_0$. By using Proposition \ref{HeisenbergCR}, we have
	\begin{equation*}
	\langle \X \xi_i, \K \xi_i \rangle - \langle  \K \xi_i, \X \xi_i \rangle = \iu \|\xi_i\|^2 \, .
	\end{equation*}
	Hence, for $i \to \infty$, it happens that
	\begin{equation*}
	\begin{aligned}
	&\langle \X \xi_i, \K \xi_i \rangle - \langle  \K \xi_i, \X \xi_i \rangle \to \iu \|\varphi_0\|^2 \, ,\\
	&\langle \X \xi_i, \K \xi_i \rangle - \langle  \K \xi_i, \X \xi_i \rangle \to \langle\X \varphi_0, \K \varphi_0 \rangle  -  \langle\K \varphi_0, \X \varphi_0 \rangle = 0 \, .
	\end{aligned}
	\end{equation*}
	This implies that $\varphi_0 = 0$ and so $\{\varphi_{mn}\}_{m,n \in \Z}$ can not be a complete and orthonormal system for $\mathcal{V}$. Thus we get a contradiction and the theorem is proved.
\end{proof}

\section{Application to the Landau model: non existence of well-localized Gabor frames}
\label{Sec:application}
The Balian-Low theorem has been already applied to the Landau Hamiltonian by Zak, \cite{Zak97,Zak98}. These works are based on the theory of linear canonical transformations of Moschinsky and Quesne \cite{MoQu71}, which requires to handle several integral transformations, and on the use of the Bloch--Floquet--Zak transform, also called Zak transform \cite{Zak67,Zak68}. In the following we provide an alternative argument - hopefully more transparent for some readers - based on a direct application of the Balian-Low theorem (Theorem \ref{BalianLowTheorem}) to the Landau levels. Our alternative argument uses only the theory explained so far.

By definition, $\OC_1$, $\OC_2$ and $C_0^{\infty}(\R^2)$ satisfy Assumption \ref{Assumption}.(i). Moreover by explicit computation we get that for every $t,s \in \R$
$$
e^{\iu t \OC_1} e^{\iu s \OC_2} = e^{\iu \frac{ts}{b}}e^{\iu s \OC_2} e^{\iu t \OC_1} \, .
$$
By setting $\widetilde{\OC}_1=-b \OC_1$, we obtain
\begin{equation}
\label{MagWeyl}
e^{\iu t \widetilde{\OC}_1} e^{\iu s \OC_2} = e^{-\iu st}e^{\iu s \OC_2} e^{\iu t \widetilde{\OC}_1} \, .
\end{equation}
Hence $\widetilde{\OC}_1$ and $\OC_2$ together with the dense set $\Dd=C_0^{\infty}(\R^2)$ define a Gabor triple $\Gabor_L$. Consider now the $n^{th}$ Landau level and the lattice $\Z^2$. Since $\OC_1$ and $\OC_2$ commute with the Landau Hamiltonian, in view of \eqref{QDef}, every vector $\varphi$ in the $n^{th}$ Landau level generates a generalized Gabor frame given by
$$
T_{m,n} \varphi = e^{\iu m \widetilde{\OC}_1} e^{\iu n \OC_2} \varphi \, .
$$

Assume that there exists a vector $\varphi_0$ that generates a Generalized Gabor frame that is an orthonormal basis for the $n^{th}$ Landau level.  Applying Theorem \ref{BalianLowTheorem} we conclude that $\varphi_0$ cannot be in both the domain of $\widetilde{\OC}_1$ and of $\OC_2$. Assume now that $\varphi_0$ is not a null vector and is in both the domain of $X_1$ and of $X_2$, or in other words that
\begin{equation}
\label{Aux5}
\|X_1 \varphi_0\| \|X_2 \varphi_0\| < \infty\, . 
\end{equation}
From definition \eqref{QDef}, it follows that
$$
X_1=-\frac{1}{b}\left(\widetilde{\OC}_1+\KM_2\right) \, , \qquad X_2=\OC_2+\KM_1 \, .
$$
Since $\varphi_0$ is an eigenvector of the Hamiltonian $H_L$, we know that $\varphi_0$ is in the domain of $\KM_1$ and $\KM_2$, therefore by linearity it has to be also in the domain of $\OC_1$ and $\OC_2$ in contradiction with Theorem \ref{BalianLowTheorem}. Thus, we conclude that $\varphi_0$ cannot satisfy \eqref{Aux5}, \ie it cannot be well-localized in both directions.

Note that we are not addressing the issue of the existence of a Generalized Gabor frame for a single Landau Level. As it is widely discussed in the literature, see for example \cite{Perelomov1971,BaBuGiKl1971,BoZa78}, the existence of a complete  Gabor frame for a Landau level is related to the existence of a complete von Neumann set for the same space, 
which in turn is related to the choice of the lattice $\Gamma$. The orthogonality and the completeness 
of the elements of the GGF is crucially related to the properties of both the lattice $\Gamma$ and the generator $\varphi_0$ \cite{Simon15}, and can be investigated by using the Zak transform as done in \cite{BaGrZa75, Zak97, Zak98}. Our Theorem says that whenever one can construct such an orthonormal basis for the subspace $\mathcal{V}$, then the elements of the basis cannot be well-localized in position space, in agreement with Zak's result \cite{Zak97,Zak98}, Thouless argument \cite{Thouless84} and the more recent model-independent analysis in \cite{MoPaPiTe1,MoPaPiTe2}.



\bigskip \bigskip


{\footnotesize  

	\begin{tabular}{ll}
	
		(M. Moscolari) 
		&  \textsc{Dipartimento di Matematica, \virg{La Sapienza} Universit\`{a} di Roma} \\
		&  Piazzale Aldo Moro 2, 00185 Rome, Italy \\
		&  {E-mail address}: 
		\href{mailto:moscolari@mat.uniroma1.it}{\texttt{moscolari@mat.uniroma1.it}} \\
		\\
		(G. Panati) 
		&  \textsc{Dipartimento di Matematica, \virg{La Sapienza} Universit\`{a} di Roma} \\
		&  Piazzale Aldo Moro 2, 00185 Rome, Italy \\
		&  {E-mail address}: \href{mailto:panati@mat.uniroma1.it}{\texttt{panati@mat.uniroma1.it}} \\
		\\

	\end{tabular}
	
}


\begin{thebibliography}{OOOOO}
	

\bibitem[AZ]{AtZi1997}
\textsc{Altland A.; Zirnbauer M.:} Non-standard symmetry classes in mesoscopic normalsuperconducting hybrid structures. {\it Phys. Rev. B} {\bf 55}, 1142--1161 (1997).

\bibitem[ASS]{AvSeSi1994}
\textsc{Avron, J. E.; Seiler, R.; Simon, B.:} Charge deficiency, charge transport and comparison of dimensions. {\it Commun. Math. Phys.} {\bf 159}, 399 (1994).


\bibitem[BGZ]{BaGrZa75}
\textsc{Bacry, H.; Grossmann, A.; Zak, J.} : Proof of completeness of lattice states in the $k \, q$ representation.
{\it Phys. Rev. B} {\bf 12}, 1118-1120 (1975).  

\bibitem[Bal]{Bal81}
\textsc{Balian R.} : Un principe d'incertitude fort en th\'eorie du signal ou en m\'ecanique quantique. {\it Compt. Rend. Acad. Sci. Ser. II} {\bf 292}, 1357--1362 (1981).

\bibitem[B]{Ba1961}
\textsc{Bargmann V.:} On a Hilbert space of analytic functions and an associated integral transform part I. {\it Comm. Pure Appl. Math. } {\bf 14}, 187--214 (1961).


\bibitem[BBGK]{BaBuGiKl1971}
\textsc{Bargmann V.; Butera P.; Girardello L.; Klauder R.:} On the completeness of the coherent states. {\it Rep. Math. Phys. } {\bf 2}, 221--228 (1971).

\bibitem[Bat]{Bat88}
\textsc{Battle G.} : Heisenberg proof of the Balian-Low theorem. {\it Lett. Math. Phys.} {\bf 15}, 175--177 (1988).


\bibitem[BPCM]{BrPaCaMoMa2007}
\textsc{Brouder Ch.; Panati G.; Calandra M.; Mourougane Ch.; Marzari N.:} Exponential
localization of Wannier functions in insulators. {\it Phys. Rev. Lett.} {\bf 98}, 046402 (2007).

\bibitem[BZ]{BoZa78}
\textsc{Boon, M.; Zak, J.} : Discrete coherent states on the von Neumann lattice. {\it Phys. Rev. B} {\bf 18}, 6744--6751 (1978).


\bibitem[CLPS]{CaLePaSt2016}
\textsc{Canc\`{e}s, \'{E}.;  Levitt, A.; Panati, G.; Stoltz, G.}: Robust determination of maximally-localized Wannier functions. {\it Phys. Rev. B} {\bf 95}, 075114 (2017). 


\bibitem[Cl$_1$]{Cl1} \textsc{des Cloizeaux, J.} : Energy bands and projection operators in a crystal: Analytic and asymptotic properties. {\it  Phys. Rev.}\ {\bf  135},  A685--A697  (1964).

\bibitem[Cl$_2$]{Cl2} \textsc{des Cloizeaux, J.} : Analytical properties of $n$-dimensional energy bands and Wannier functions. {\it Phys. Rev.}\ {\bf 135}, A698--A707 (1964).	

\bibitem[CHN]{CoHeNe2015}
\textsc{Cornean, H.; Herbst, I.; Nenciu, G.} : On the construction of composite Wannier functions. {\it Ann. Henri Poincar\'{e}} {\bf 17}, 3361--3398  (2016).


\bibitem[CMM]{CoMoMos2018}
\textsc{Cornean, H.; Monaco, D.; Moscolari, M.}:  Parseval frames of exponentially localized magnetic Wannier functions.  { \it Preprint.} arXiv:1704.00932.pdf  (2018). 




\bibitem[FMP]{FiMoPa_2}
\textsc{Fiorenza, D.; Monaco, D.; Panati, G.} : Construction of real-valued Localized composite Wannier functions for insulators. {\it Ann. Henri Poincar\'{e}}  {\bf 17}, 63--97 (2016).


\bibitem[GJ]{GiJa84} \textsc{Girvin, S. M.; Jach, T.} : Formalism for the quantum Hall effect: Hilbert space of analytic functions. {\it  Phys. Rev. B}\ {\bf  29}, 5617--5625  (1984).

\bibitem[Ha]{Hall13}
\textsc{Hall B.C.} : {\it Quantum Theory for Mathematicians.} Graduate Texts in Mathematics, Springer-Verlag New York, 2013.



\bibitem[HK]{HasanKane}
\textsc{Hasan, M.Z.; Kane, C.L.} :  Colloquium: Topological Insulators. 
{\it Rev. Mod. Phys.} {\bf 82}, 3045--3067 (2010).


\bibitem[HS]{HeSj89}
\textsc{Helffer, B.; Sjostrand, J.:} Equation de Schr\"odinger avec champ magnetique et equation de Harper. In: Holden, H., Jensen, A. (eds.), {\it Schr\"odinger  operators}, 118-–197, Lecture Notes in Physics 345, Springer, Berlin, (1989).



\bibitem[Ki]{Kitaev2009}
\textsc{Kitaev A.}: Periodic table for topological insulators and superconductors. {\it AIP Conf. Proc.} {\bf 1134}, 22-30 (2009).


\bibitem[Ko]{Kohn59} \textsc{Kohn, W.} :  Analytic Properties of Bloch Waves and Wannier Functions. {\it Phys. Rev.\ }{\bf 115}, 809 (1959).

\bibitem[Lo]{Low89}
\textsc{Low. F:} Complete sets of wave packets. In: C. DeTar (ed.) et al. (ed.), {\it A Passion for Physics: Essays in Honor of Geoffrey Chew}, 17--22, World Scientific, (1985).


\bibitem[MdP]{MoPa}
\textsc{Monaco, D.; Panati, G.} : Symmetry and localization in periodic crystals: triviality of Bloch bundles with a fermionic time-reversal symmetry. {\it Acta App. Math.} {\bf 137}, 185--203 (2015).


\bibitem[MMMP]{MoMaMoPa}
\textsc{Marcelli, G.; Monaco, D.; Moscolari, M; Panati, G.} : The Haldane model and its localization dichotomy. {\it Rend. Mat. Appl.} {\bf 39}, 307--327 (2018).


\bibitem[MPPT$_1$]{MoPaPiTe1}
\textsc{Monaco, D.; Panati, G.; Pisante, A.; Teufel, S.} : Optimal decay of Wannier functions in Chern and Quantum Hall insulators. {\it Commun. Math. Phys.} {\bf 359}, 61--100 (2018).


\bibitem[MPPT$_2$]{MoPaPiTe2}
\textsc{Monaco, D.; Panati, G.; Pisante, A.; Teufel, S.} : The Localization Dichotomy for gapped periodic quantum systems. {\texttt{arXiv:1612.09557}} (2016).


\bibitem[MmP]{MoscolariPanati}
\textsc{Moscolari, M.; Panati, G.} : Ultra Generalized Wannier Functions for systems without time-reversal symmetry and their relevance to transport. In preparation.



\bibitem[MQ]{MoQu71}
\textsc{Moshinsky M. and Quesne, C.} : Linear Canonical Transformations and their unitary representations. {\it J. Math. Phys.} {\bf 18}, 1772--1780 (1998).

\bibitem[Ne$_1$]{Ne83} \textsc{Nenciu, G.} : Existence of the exponentially localised Wannier functions. 
{\it Commun. Math. Phys.\,} {\bf 91}, 81--85 (1983).

\bibitem[NN]{NeNe82}
\textsc{Nenciu, A.;  Nenciu, G.} :   Dynamics of Bloch electrons in external electric fields. II. The existence of Stark-Wannier ladder resonances. {\it  J. Phys.\  A}{\bf 15},  3313--3328  (1982). 

\bibitem[Pa]{Panati2007}
\textsc{Panati, G.}: Triviality of Bloch and Bloch-Dirac bundles. {\it Ann. Henri Poincar\'e} {\bf 8},  995--1011 (2007).

\bibitem[Pe]{Perelomov1971}
\textsc{Perelomov, A.M.}: On the completeness of a system of coherent states. {\it Theoret. and Math. Phys.} {\bf 6},  213--224 (1971).

\bibitem[RZE]{RaZuEf1997}
\textsc{Rashba, E.I.; Zhukov, L.E.; Efros, A.L.
:} Orthogonal localized wave functions of an electron in a magnetic field. {\it Phys. Rev. B} {\bf 55}, 5306--5312 (1997).

\bibitem[Se]{Segal15}
\textsc{Segal, I.} : Mathematical problems of relativistic physics, Chap. VI .  {\it Proceedings of the Summer Seminar, Boulder, Colorado, 1960, Vol. II. (M. Kac, Ed.).} Lectures in Applied Mathematics, American Math. Soc., Providence, Rhode Island, (1963).


\bibitem[Si]{Simon15}
\textsc{Simon, B.} :  {\it Harmonic Analysis}, Volume 3 in {\it A Comprehensive Course in Analysis}, American Mathematical Society, Providence, RI, 2015.


\bibitem[TKNN]{TKNN}
\textsc{Thouless, D.J., Kohmoto, M., Nightingale, M.P., de Nijs, M.}: 
Quantized Hall conductance in a two-dimensional periodic potential. 
{\it Phys. Rev. Lett.} {\bf 49}, 405--408 (1982).

\bibitem[Th]{Thouless84}
\textsc{Thouless, D.J.} : Wannier functions for magnetic sub-bands. {\it J. Phys. C} {\bf 17}, L325--L327 (1984).

\bibitem[Z$_1$]{Zak67}
\textsc{Zak J.} : Finite translations in Solid-State Physics. {\it Phys. Rev. Lett.} {\bf 19}, 1385--1387 (1967).

\bibitem[Z$_2$]{Zak68}
\textsc{Zak J.} : Dynamics of electrons in solids in external fields. {\it Phys. Rev.} {\bf 168}, 686--695 (1968).

\bibitem[Z$_3$]{Zak97}
\textsc{Zak J.} : Balian-Low Theorem for Landau Levels. {\it Phys. Rev. Lett.} {\bf 79}, 533--536 (1997).	

\bibitem[Z$_4$]{Zak98}
\textsc{Zak J.} : Orthonormal sets of localized functions for a Landau level. {\it J. Math. Phys.} {\bf 39}, 4195--4200 (1998).





\end{thebibliography}
\end{document}